\newtheorem{theo}{Theorem}[section]
\newtheorem{prop}[theo]{Proposition}
\newtheorem{cor}[theo]{Corollary}
\newtheorem{lemma}[theo]{Lemma}
\theoremstyle{definition}
\newtheorem{defi}[theo]{Definition}
\newtheorem{exa}[theo]{Example}
\newtheorem{rem}[theo]{Remark}
\numberwithin{equation}{section}
\newcommand{\N}{{\mathbb N}}
\newcommand{\F}{{\mathbb F}}
\newcommand{\Fqn}{\mathbb{F}_{q^n}}
\newcommand{\Fqns}{\mathbb{F}_{q^n}^*}
\newcommand{\Z}{{\mathbb Z}}
\newcommand{\cA}{{\mathcal A}}
\newcommand{\cC}{{\mathcal C}}
\newcommand{\cD}{{\mathcal D}}
\newcommand{\cG}{{\mathcal G}}
\newcommand{\cO}{{\mathcal O}}
\newcommand{\cL}{{\mathcal L}}
\newcommand{\cU}{{\mathcal U}}
\newcommand{\cV}{{\mathcal V}}
\newcommand{\cW}{{\mathcal W}}
\newcommand{\cX}{{\mathcal X}}
\newcommand{\cY}{{\mathcal Y}}
\newcommand{\im}{\mbox{\rm im}\,}
\renewcommand{\mod}{\mbox{\rm mod}\,}
\newcommand{\ideal}[1]{\mbox{$\langle{#1}\rangle$}}
\newcommand{\GL}{\mathrm{GL}}
\newcommand{\orb}{\textup{Orb}}
\newcommand{\orbb}{\textup{Orb}_{\beta}}
\newcommand{\ds}{\textup{d}_{\rm{S}}}
\newcommand{\stab}{\textup{Stab}}
\newcommand{\stabb}{\textup{Stab}_{\beta}}
\newcommand{\stabp}{\textup{Stab}^+}
\newcommand{\stabbp}{\textup{Stab}_{\beta}^+}
\newcommand{\mipo}{\textup{minpoly}}
\newcommand{\dsfrac}[2]{\displaystyle\frac{#1}{#2}}
\newcounter{alp}
\newcounter{ara}
\newcounter{rom}
\newenvironment{alphalist}{\begin{list}{(\alph{alp})\hfill}{\usecounter{alp}
     \topsep.5ex \labelwidth.6cm \leftmargin.6cm \labelsep0cm
     \rightmargin0cm \parsep0ex \itemsep0ex
     \partopsep1.6ex}}{\end{list}}
\newenvironment{arabiclist}{\begin{list}{(\arabic{ara})\hfill}{\usecounter{ara}
     \topsep0ex \labelwidth.6cm \leftmargin.6cm \labelsep0cm
     \rightmargin0cm \parsep0ex \itemsep0ex
     \partopsep1.6ex}}{\end{list}}
\begin{document}
\title{Cyclic Orbit Codes and Stabilizer Subfields}
\date{\today\\[5ex] \emph{Dedicated to the memory of Axel Kohnert (1962 -- 2013)}}
\author{Heide Gluesing-Luerssen, Katherine Morrison, Carolyn Troha\footnote{HGL was partially supported by the National
Science Foundation Grant DMS-1210061.
HGL and CT are with the Department of Mathematics, University of Kentucky, Lexington KY 40506-0027, USA;
\{heide.gl, carolyn.troha\}@uky.edu. KM is with the School of Mathematical Sciences, University of Northern Colorado, Greeley, CO 80639, USA;
Katherine.Morrison@unco.edu.}}

\maketitle

{\bf Abstract:}
Cyclic orbit codes are constant dimension subspace codes that arise as the orbit of a
cyclic subgroup of the general linear group acting on subspaces in the given ambient space.
With the aid of the largest subfield over which the given subspace is a vector space, the cardinality
of the orbit code can be determined, and estimates for its distance can be found.
This subfield is closely related to the stabilizer of the generating subspace.
Finally, with a linkage construction larger, and longer, constant dimension codes can be derived from cyclic
orbit codes without compromising the distance.

{\bf Keywords:} Random network coding, constant dimension subspace codes, cyclic orbit codes, group actions

{\bf MSC (2010):} 11T71, 94B60

\section{Introduction}\label{SS-Intro}
\setcounter{equation}{0}
Random network coding, introduced by Ahlswede et al.~in~\cite{ACLY00}, has proven to be a very effective tool for maximizing the information flow in
a non-coherent network with multiple sources and sinks.
The main feature of the network is that the nodes form random linear combinations of the incoming packets (vectors) and transmit the
resulting packets further  to their neighboring nodes.
As a consequence, the receiver nodes (sinks) of the network will obtain linear combinations of the packets that have been injected
into the network at its sources.

While this method is very effective in disseminating the information throughout the network, it is at the same time also highly
sensitive to error propagation. Due to the linear combinations of packets that are formed and transmitted
further, a single packet that has been corrupted (through noise, erasures, or injection of wrong packets by adversaries)
may contaminate all further packets.

In order to overcome this deficiency, K{\"o}tter and Kschischang~\cite{KoKsch08} developed an algebraic approach to
random network coding by considering messages as subspaces of some fixed vector space~$\F^n$.
This nicely captures the main feature of the network flow, namely the linear combinations of the packets.
In other words, codewords are now simply subspaces of~$\F^n$, and a code is a collection of such subspaces.
Transmitting information through the network is thus reformulated in terms of transmitting subspaces.
The relevant distance measure for this setting depends on the particular type of problem to be studied, but in essence
the distance between two subspaces amounts to the codimension of their intersection: the larger the intersection, the
smaller the distance of the subspaces.

Their ground-breaking paper~\cite{KoKsch08} initiated intensive research efforts on subspace codes, see
\cite{EKW10,EtSi09,EtVa11,GPB10,KSK09,KoKu08,RoTr13,SiKsch09,TMBR13} and the references therein.
In~\cite{SiKsch09}, Silva and Kschischang derive further details on the appropriate metrics for the various networks models, while
in~\cite{KSK09}, Khaleghi et al.\ determine various bounds for the cardinality and distance of subspace codes.
Some of these bounds are improved upon by Etzion and Vardy in~\cite{EtVa11}, and the authors also present further constructions of
subspace codes.
In~\cite{EtSi09}, Etzion and Silberstein present a construction of subspace codes with large distance and cardinality which
is based on rank-metric codes as introduced and studied earlier by Gabidulin in~\cite{Gab85}.
Decoding of such rank-metric codes is investigated by Gabidulin et al.\ in~\cite{GPB10}, and the results are further applied to
subspace codes for various combinations of errors and erasures.

The paper at hand is most closely related to the references~\cite{EKW10,EtVa11,KoKu08,RoTr13,TMBR13}.
All these papers study, or touch upon, cyclic orbit codes.
These are subspace codes that arise as an orbit of a subspace in~$\F^n$ under a cyclic subgroup
of~$\GL(n,\F)$.
If the group is irreducible, the code is called an \emph{irreducible cyclic orbit code}.
In this case, the group is conjugate to a subgroup of~$\Fqns$, and by considering the subspaces in the~$\F$-vector space~$\Fqn$,
one can utilize properties of the field extension~$\Fqn$.
In finite geometry, an element in $\GL(n,\F)$ of order~$q^n-1$ is called a Singer cycle.
They thus generate irreducible cyclic subgroups of $\GL(n,\F)$, and so their corresponding subspace codes are irreducible cyclic orbit codes.

In~\cite{KoKu08}, Kohnert and Kurz make use of the field extension~$\Fqn$ along with solving linear inequalities under a
Diophantine restriction in order to find subspace codes of constant dimension with large distance.
The interesting fact is that they impose a prescribed automorphism group on the putative solutions in order to reduce the system
of inequalities significantly, thus making the search feasible.
Using the cyclic group~$\F_{q^n}^*$, their method results in the union of cyclic orbit codes.
In \cite[Sec.~5]{KoKu08} the authors present their results for length $n=6,\ldots,14$, dimension $k=3$, and distance $\ds=4$.
Some of these codes improve upon the lower bounds for the cardinality that were known at that time, and
most of these codes are still the best ones known for given length, dimension, and distance.
In~\cite{EKW10}, Elsenhans et al. present a decoding algorithm for cyclic orbit codes of dimension~$3$.

In~\cite{EtVa11}, Etzion and Vardy introduce the notion of a \emph{cyclic subspace code}.
In our terminology this is a collection of subspaces in $\F^n=\F_{q^n}$ that is invariant under multiplication by a primitive
element~$\alpha$ of ~$\F_{q^n}$.
In other words, a cyclic subspace code is a union of cyclic orbit codes, potentially of different dimensions.
In~\cite[Sec.~III]{EtVa11} they present optimal cyclic codes of length~$8$ and~$9$ in the sense that there is no larger
cyclic code of the same length and distance.
Their example of length~$9$ improves upon the code of length~$9$, dimension~$3$, and distance~$4$
given by Kohnert and Kurz~\cite{KoKu08} because they were able to add a cyclic spread code
(hence cardinality $(2^9-1)/(2^3-1)=73$) to a collection of~$11$ cyclic orbit codes of cardinality~$2^9-1=511$.
This leads to an optimal cyclic constant dimension code of cardinality~$5694$.
This cardinality comes remarkably close to the bound $\cA_2(9,4,3)\leq 6205$, resulting from the Anticode bound, see~\cite[Thm.~1]{EtVa11}.

The above mentioned codes were all found by computer search based on
the deliberate choice of $\F_{q^n}^*$ as the automorphism group in order to make the search feasible.
Despite this restriction on the search, the codes found all come quite close to known bounds.
This indicates that cyclic orbit codes form a powerful class of constant dimension codes that needs to be investigated further.

Rosenthal and Trautmann~\cite{RoTr13} and Trautmann et al.~\cite{TMBR13} present an algebraic treatment of cyclic orbit codes
by combining the ideas of~\cite{KoKu08} with detailed methods from the theory of group actions.
They also extend their results to orbits under reducible cyclic groups.

We close this brief overview of the literature by mentioning that cyclic orbit codes also play a crucial role in the construction
of $q$-analogs of Steiner systems; for details we refer to~\cite{BEOVW13,EtVa11s} and the references therein.

In this paper we will study cyclic orbit codes generated by subspaces of~$\Fqn$ by
specifying the largest subfield of~$\Fqn$ over which the given subspace is a vector space.
This subfield, later called the \emph{best friend} of the code or subspace, is closely related to the stabilizer of the orbit.
Designing a subspace with a pre-specified best friend allows us to control cardinality and distance of the
orbit code.
In particular, we can give estimates on the distance in terms of the best friend.
Moreover, using the best friend we are able to compute the distance with the aid of multisets.
The computation improves upon earlier results in~\cite{KoKu08,RoTr13} by reducing the cardinality of the multiset.

Finally, we will present a construction that allows us to link cyclic orbit codes leading to longer and larger constant
dimension codes without compromising the distance.

\section{Preliminaries}\label{SS-Prelim}
We fix a finite field~$\F=\F_q$.
Recall that a \emph{subspace code of length~$n$} is simply a collection of subspaces in~$\F^n$.
The code is called a \emph{constant dimension code} if all subspaces have the same dimension.
The \emph{subspace distance} of a subspace code~$\cC$ is defined as
\[
    \ds(\cC):=\min\{\ds(\cV,\cW)\mid \cV,\,\cW\in\cC,\,\cV\neq\cW\},
\]
where the distance between two subspaces is
\begin{equation}\label{e-dist}
      \ds(\cV,\cW):=\dim\cV+\dim\cW-2\dim(\cV\cap\cW).
\end{equation}
This distance may be interpreted as the number of insertions and deletions of vectors that is needed in order to transform
a basis of~$\cV$ into a basis of~$\cW$.
It thus coincides with the corresponding graph distance (the length of the shortest path
from~$\cV$ to~$\cW$ in the graph with vertices being the subspaces of~$\F^n$ and where two subspaces are joined by an
edge if they differ by dimension one and the smaller one is contained in the larger one).
If $\dim\cV=\dim\cW=k$, then $\ds(\cV,\cW)=2(k-\dim(\cV\cap\cW))=2\big(\dim(\cV+\cW)-k\big)$.
As a consequence, if~$\cC$ is a constant dimension code of dimension~$k$ then
\begin{equation}\label{e-distC}
  \ds(\cC)\leq\min\{2k,\,2(n-k)\}.
\end{equation}

The \emph{dual} of a subspace code $\cC$ is defined as
\begin{equation}\label{e-Cdual}
   \cC^{\perp}:=\{\cU^\perp\mid \cU\in\cC\}.
\end{equation}
It is easy to see that $\ds(\cV^\perp,\cW^\perp)=\ds(\cV,\cW)$, and therefore $\ds(\cC)=\ds(\cC^\perp)$.

Two subspace codes~$\cC,\,\cC'$ of length~$n$ are called \emph{linearly isometric} if there exists an $\F$-linear isomorphism
$\psi:\F^n\longrightarrow\F^n$ such that $\cC'=\{\psi(\cU)\mid \cU\in\cC\}$.
This terminology stems from the fact that isomorphisms preserve dimensions of subspaces and thus preserve the distance between
any two subspaces.
Hence linearly isometric codes have the same subspace distance and even the same distance distribution, i.e., the list of all
distances between any two distinct subspaces in~$\cC$ coincides up to order with the corresponding list of~$\cC'$.
In \cite[Def.~9]{TMBR13} linear isometries are denoted as $\GL_n(\F)$-isometries.

Consider now the field extension~$\F_{q^n}$.
Since the $\F$-vector spaces $\F^n$ and $\F_{q^n}$ are isomorphic, we may consider subspace codes as collections of
subspaces in~$\F_{q^n}$.
In this paper we will focus specifically on subspace codes that are given as orbits under a particular group action.

In order to make this precise, we fix the following terminology.
An element~$\beta$ of~$\Fqn$ is called \emph{irreducible} if the minimal polynomial of~$\beta$ in~$\F[x]$, denoted by $\mipo(\beta,\F)$,
has degree~$n$.
Hence $\Fqn=\F[\beta]$.
As usual, we call~$\beta$ (and its minimal polynomial) \emph{primitive} if the multiplicative cyclic group generated by~$\beta$, denoted by
$\ideal{\beta}$, equals $\Fqns$.

We will study subspace codes that are derived from the natural action of the group $\ideal{\beta}$ on~$\Fqn$.
This action induces an action on the set of subspaces of~$\Fqn$, and thus gives rise to the following
type of constant dimension codes.
These codes were introduced in a slightly different form in~\cite{RoTr13,TMBR13};
we will comment on the relation to~\cite{RoTr13,TMBR13} after the definition.

\begin{defi}\label{D-OrbU}
Fix an irreducible element~$\beta$ of~$\Fqn$.
Let~$\cU$ be a $k$-dimensional subspace of the $\F$-vector space~$\Fqn$.
The \emph{cyclic orbit code} generated by~$\cU$ with respect to the group~$\ideal{\beta}\subseteq\Fqns$
is defined as the set
\begin{equation}\label{e-orbbU}
    \orbb(\cU):=\{\cU\beta^i\mid i=0,1,\ldots,|\beta|-1\}.
\end{equation}
The code $\orbb(\cU)$ is called \emph{primitive} if~$\beta$ is primitive.
\end{defi}

Obviously, a cyclic orbit code is a constant dimension code.

Let us briefly relate our approach to~\cite{EtVa11,RoTr13,TMBR13}.
Let $f=x^n+\sum_{i=0}^{n-1}f_i x^i\in\F[x]$ be the minimal polynomial of~$\beta$.
The~$\F$-vector spaces $\Fqn$ and $\F^n$ are isomorphic via the coordinate map with respect to the basis
$1,\,\beta,\,\ldots,\,\beta^{n-1}$.
In other words, we have the $\F$-isomorphism
\begin{equation}\label{e-FFn}
   \varphi:\Fqn\longrightarrow \F^n,\quad \sum_{i=0}^{n-1}a_i\beta^i\longmapsto (a_0,\ldots,a_{n-1}).
\end{equation}
Let~$M_f\in\GL_n(\F)$ be the companion matrix of~$f$, thus\footnote{Due to row vector notation, our companion
matrix is the transpose of the classical companion matrix.}
\begin{equation}\label{e-Mf}
   M_f=\begin{pmatrix} 0&1& & &  \\ & &1&  &  \\ & & &\ddots &  \\  & & & &1\\
      -f_0&-f_1&-f_2& \ldots& -f_{n-1}\end{pmatrix}.
\end{equation}
Since~$f$ is the minimal polynomial of~$\beta$, multiplication by~$\beta$ in~$\Fqn$ corresponds to multiplication by~$M_f$
in~$\F^n$ under the isomorphism~$\varphi$, i.e.,
\begin{equation}\label{e-betaM}
   \varphi(a\beta^i)=\varphi(a)M_f^i\text{ for all $a\in\Fqn$ and $i\in\N$.}
\end{equation}

Using the isomorphism~\eqref{e-FFn} of~$\Fqn$ with~$\F^n$, the orbit code~$\cC:=\orbb(\cU)$ takes the following form.
We can write~$\varphi(\cU)$ as $\varphi(\cU)=\im U:=\{xU\mid x\in\F^k\}$, i.e., the rowspace of~$U$, for a suitable matrix
$U\in\F^{k\times n}$ of rank~$k$.
Then $\varphi(\cU\beta^i)=\im(U M_f^i)$, where~$M_f$ is as in~\eqref{e-Mf}.
Thus, under the isomorphism~\eqref{e-FFn} the orbit code~$\orbb(\cU)$ simply becomes
\begin{equation}\label{e-UMf}
    \{\im(U M_f^i)\mid 0\leq i\leq|\beta|-1\}.
\end{equation}
In other words, the action of the cyclic group~$\ideal{\beta}\leq\Fqns$ on subspaces in~$\Fqn$ turns into the action of
the cyclic group $\ideal{M_f}\leq \GL_n(\F)$ on subspaces in~$\F^n$.

In~\cite{RoTr13,TMBR13} the authors introduce, more generally, orbit codes in~$\F^n$ with respect to a subgroup of~$\GL_n(\F)$.
These are subspace codes of the form $\{\im(U A)\mid A\in\cG\}$,
where~$U$ is any matrix of rank~$k$ in $\F^{k\times n}$ and $\cG$ a subgroup of~$\GL_n(\F)$.
The orbit code is called \emph{cyclic} if the group~$\cG$ is cyclic and \emph{irreducible} if~$\cG$ is irreducible, i.e.,
it does not have any nontrivial invariant subspaces in~$\F^n$.

It is easy to see that the cyclic group~$\ideal{M_f}\leq \GL_n(\F)$ is irreducible whenever~$f$ is an irreducible polynomial.
Hence the orbit codes in Definition~\ref{D-OrbU} are irreducible cyclic
orbit codes in the sense of~\cite{RoTr13,TMBR13}.

Furthermore, every irreducible matrix~$A\in\GL_n(\F)$ has an irreducible
characteristic polynomial, say~$g\in\F[x]$, and~$A$ is similar to the companion matrix~$M_g$; see also~\cite[Sec.~IV.A]{TMBR13}.
Thus $A=SM_gS^{-1}$ for some $S\in\GL_n(\F)$, and the irreducible cyclic subgroup~$\cG=\ideal{A}$ of~$\GL_n(\F)$
is conjugate to the cyclic subgroup~$\ideal{M_g}$.
As a consequence, the isomorphism of~$\F^n$ induced by~$S$ yields a linear isometry between
$\{\im(UA^i)\mid i\in\N\}$ and $\{\im(VM_g^i)\mid i\in\N\}$, where $V=U S$; see also~\cite[Thm.~9]{RoTr13}.

All of this shows that the study of irreducible cyclic orbit codes may be
restricted to orbit codes with respect to irreducible cyclic subgroups of~$\Fqns$, and these
are exactly the codes in Definition~\ref{D-OrbU}.
In this context, matrices of order~$q^n-1$ in the group~$\GL_n(\F)$ are also called \emph{Singer cycles}.
They thus correspond to the primitive elements of~$\Fqn$.

In~\cite[p.~1170]{EtVa11}, the authors introduce \emph{cyclic subspace codes} in~$\Fqn$.
These are codes that are invariant under multiplication by a primitive element, say~$\alpha$, of~$\Fqn$.
In other words, a cyclic subspace code is a union of primitive cyclic orbit codes, i.e.,
$\cC=\bigcup_{t=1}^T\hspace*{-2em}\raisebox{.4ex}{$\cdot$}\hspace*{2em} \orb_{\alpha}(\cU_t)$.
In~\cite{EtVa11} the authors do not require that~$\cC$ be a constant dimension code, hence $\cU_1,\ldots,\cU_T$ may have different
dimensions.

We close this section with the following simple fact.
\begin{rem}\label{R-Cperp}
The dual of an orbit code (in the sense of~\eqref{e-Cdual}) is an orbit code again.
Indeed, for any subspace $\cU\in\F^n$ and matrix $A\in\GL_n(\F)$ we have $(\cU A)^\perp=\cU^{\perp}(A^{\sf T})^{-1}$.
Moreover, $A^{\sf T}=SAS^{-1}$ for some $S\in\GL_n(\F)$, and therefore
$\orbb(\cU)^{\perp}$ is linearly isometric to $\orbb(\cU^{\perp})$; see also \cite[Thm.~18]{TMBR13}.
\end{rem}

As a consequence, we may and will restrict ourselves to cyclic orbit codes generated by a subspace~$\cU$ with $\dim\cU\leq n/2$.

\section{Stabilizer Subfield and Cardinality of Cyclic Orbit Codes}\label{SS-Basics}
Throughout this section we fix an irreducible element~$\beta$ of~$\Fqn$.

Consider a $k$-dimensional subspace~$\cU$ of~$\Fqn$ and its orbit code $\orbb(\cU)$.
In the following, we will mainly restrict ourselves to subspaces~$\cU$ that contain the identity~$1\in\Fqn$.
This will facilitate later considerations of the cardinality of the orbit code.
The restriction is not at all severe because if~$1\not\in\cU$ then for any nonzero element $u\in\cU$
the subspace $\tilde{\cU}:=\cU u^{-1}$ contains~$1$.
Since the $\F$-isomorphisms on~$\Fqn$ given by multiplication with nonzero constants commute,
$\tilde{\cU}\beta^i=\cU\beta^i u^{-1}$, and therefore multiplication by~$u^{-1}$ provides a linear isometry between
$\orbb(\cU)$ and~$\orbb(\tilde{\cU})$.
Note also that if $u^{-1}\in\ideal{\beta}$, e.g., if~$\beta$ is primitive, then $\orbb(\cU)$ and~$\orbb(\tilde{\cU})$ are equal.

Recall that the stabilizer of the subspace~$\cU$ under the action induced by~$\ideal{\beta}$ is defined as
\begin{equation}\label{e-stabU}
  \stabb(\cU):=\{\gamma\in\ideal{\beta}\mid \cU\gamma=\cU\}=\{\gamma\in\ideal{\beta}\mid \cU\gamma\subseteq\cU\}.
\end{equation}
This is clearly a subgroup of~$\ideal{\beta}$.
Let~$N\in\N$ be the minimal integer such that $\stabb(\cU) = \ideal{\beta^N}$.
Then $N$ is a divisor of $|\beta|$ and with the aid of the orbit-stabilizer theorem from group actions we have
\begin{equation}\label{e-N}
   |\stabb(\cU)|=\frac{|\beta|}{N},\quad
   \orbb(\cU)=\{\cU\beta^i\mid i=0,\ldots,N-1\},\quad
   |\orbb(\cU)|=N.
\end{equation}
We define the following subfield related to the stabilizer.
\begin{defi}\label{D-StabField}
Let $\stabbp(\cU)$ be the smallest subfield of~$\Fqn$ containing~$\F$ and the group $\stabb(\cU)$.
\end{defi}
It is clear that $\stabbp(\cU)$ is the field extension $\F[\beta^N]$, where $\ideal{\beta^N}=\stabb(\cU)$.
From this it follows immediately that~$\cU$ is a vector space over~$\stabbp(\cU)$.

We wish to briefly comment on the relation to representation theory.
\begin{rem}\label{R-ReprTh}
Let $G=\ideal{\gamma}=\stabb(\cU)$.
As we have already observed, $\stabbp(\cU)=\F[\gamma]$.
Via multiplication in~$\Fqn$, the group~$G$ has a natural $\F$-representation in $\GL(\cU)$, where $\GL(\cU)$
denotes the group of $\F$-automorphisms of~$\cU$.
As always for group representations, this naturally induces an $\F[G]$-module structure on~$\cU$.
In this case, this is the same as the canonical $\F[\gamma]$-vector space structure --
as we have already observed in the previous paragraph.
As a consequence, an $\F$-linear map~$f$ on~$\cU$ is $G$-linear (i.e., $f(gu)=gf(u)$ for all $g\in G$ and $u\in\cU$),
if and only if it is $\F[\gamma]$-linear.
In other words, the intertwining algebra of the above representation of~$G$ is given by
$\text{End}_{\F[\gamma]}(\cU)$.
Irreducibility of the representation (i.e.~$\cU$ does not have any non-trivial $G$-invariant subspaces)
is the same as saying that~$\cU$ does not have any non-trivial $\F[\gamma]$-subspaces.
Therefore the above representation is irreducible if and only if $\dim_{\F[\gamma]}(\cU)=1$.
\end{rem}

We fix the following notation.
For a primitive element~$\alpha$, thus $\ideal{\alpha}=\Fqn^*$, we drop the subscript~$\alpha$ and
simply write $\orb(\cU),\,\stab(\cU)$ and $\stabp(\cU)$.
The identities in~\eqref{e-stabU} and~\eqref{e-N} then read as
\begin{equation}\label{e-primnota}
  \stab(\cU)=\{\gamma\in\Fqns\mid \cU\gamma=\cU\},\ \orb(\cU)=\{\cU\alpha^i\mid i=0,\ldots,L-1\},
  \text{ where }L=\frac{q^n-1}{|\stab(\cU)|}.
\end{equation}
In this particular case, we have the following result.
\begin{lemma}\label{L-Stabplus}
Let~$\cU$ be a subspace of~$\Fqn$ such that $1\in\cU$.
Then $\stabp(\cU)=\stab(\cU)\cup\{0\}$ and $\stabp(\cU)$ is contained in~$\cU$.
Moreover,~$\cU$ is a vector space over $\stabp(\cU)$ with scalar multiplication being the multiplication
of the field~$\Fqn$.
\end{lemma}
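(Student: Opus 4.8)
The plan is to prove the three assertions in the order: (1) $\stab(\cU)\cup\{0\}$ is a subfield of~$\Fqn$; (2) it coincides with $\stabp(\cU)$; (3) it is contained in~$\cU$; and finally (4) to note that the vector space structure is then immediate.

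First I would show that $K:=\stab(\cU)\cup\{0\}$ is closed under addition, which is the only ring axiom that is not inherited for free. Here is where the hypothesis $1\in\cU$ enters. If $\gamma_1,\gamma_2\in\stab(\cU)$, then $\cU\gamma_i=\cU$, so in particular $1\cdot\gamma_i=\gamma_i\in\cU$ for $i=1,2$; hence $\gamma_1+\gamma_2\in\cU$ since~$\cU$ is an $\F$-subspace. Now for any $u\in\cU$ we have $u\gamma_1, u\gamma_2\in\cU$, so $u(\gamma_1+\gamma_2)=u\gamma_1+u\gamma_2\in\cU$, giving $\cU(\gamma_1+\gamma_2)\subseteq\cU$; by~\eqref{e-stabU} this means $\gamma_1+\gamma_2\in\stab(\cU)$ (if nonzero) or equals~$0$. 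Either way $\gamma_1+\gamma_2\in K$. Closure under multiplication, under additive and multiplicative inverses, and the presence of $0,1$ are routine: $\stab(\cU)$ is a multiplicative subgroup of~$\Fqns$ by~\eqref{e-stabU}, and negation is multiplication by~$-1\in\F\subseteq\stab(\cU)$ (note $\F\subseteq\stab(\cU)$ since~$\cU$ is an $\F$-space). Thus~$K$ is a subfield of~$\Fqn$ containing~$\F$.

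Next, for assertion (2), observe that any subfield of~$\Fqn$ containing~$\F$ and the group $\stab(\cU)$ must in particular contain $\stab(\cU)\cup\{0\}=K$, so $K\subseteq\stabp(\cU)$ by minimality of $\stabp(\cU)$ in Definition~\ref{D-StabField}; conversely $K$ is itself such a subfield by step~(1), so $\stabp(\cU)\subseteq K$, whence equality. For assertion~(3), we already showed above that each $\gamma\in\stab(\cU)$ lies in~$\cU$ (apply $\cU\gamma=\cU$ to $1\in\cU$), and $0\in\cU$ trivially, so $\stabp(\cU)=K\subseteq\cU$. Finally, since $\stabp(\cU)$ is a field and $\cU$ is closed under multiplication by every element of $\stab(\cU)$ (by definition of the stabilizer) and by every element of~$\F$, it is closed under multiplication by every element of $\stabp(\cU)=K=\F\cdot\stab(\cU)\cup\{0\}$; together with the fact that~$\cU$ is an abelian group under addition, this exhibits~$\cU$ as a vector space over $\stabp(\cU)$, with scalars acting by field multiplication. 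This last point was in fact already recorded in the remark following Definition~\ref{D-StabField}.

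The only genuinely substantive step is the additive closure of $\stab(\cU)\cup\{0\}$, and the key device — reducing $\cU(\gamma_1+\gamma_2)\subseteq\cU$ to the single containment $\gamma_1+\gamma_2\in\cU$ and then distributing over an arbitrary $u\in\cU$ — hinges entirely on the normalization $1\in\cU$. Everything else is bookkeeping with the group and field axioms.
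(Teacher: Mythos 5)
Your proposal is correct and takes essentially the same route as the paper: the substantive step is additive closure of $\stab(\cU)\cup\{0\}$ via distributivity ($\cU(\gamma+\gamma')\subseteq\cU\gamma+\cU\gamma'=\cU$, together with $\cU\gamma\subseteq\cU\Rightarrow\gamma\in\stab(\cU)$), after which minimality identifies this set with $\stabp(\cU)$, and $1\in\cU$ yields the containment in~$\cU$ and the $\stabp(\cU)$-vector space structure. One minor remark: contrary to your closing comment, the additive-closure step does not actually hinge on $1\in\cU$ (your distributivity argument, like the paper's, works for an arbitrary subspace); the normalization $1\in\cU$ is needed only for the conclusion $\stabp(\cU)\subseteq\cU$.
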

\begin{proof}
We know that $\stab(\cU)=\{\gamma\in\Fqns\mid \cU\gamma=\cU\}$ is a subgroup of $\Fqns$ and
contains~$\F^*$.
Thus it remains to show that $\stab(\cU)\cup\{0\}$ is closed under addition.
Let $\gamma,\,\gamma' \in \stab(\cU)$, i.e., $\cU\gamma=\cU=\cU\gamma'$.
If $\gamma+\gamma' = 0$, then $\gamma +\gamma' \in \stab(\cU)\cup\{0\}$, and we are done.
Now let $\gamma+\gamma'\neq0$.
Then $\cU(\gamma+\gamma') \subseteq \cU\gamma+\cU\gamma'=\cU+\cU = \cU$, so $\gamma + \gamma' \in \stab(\cU)$.
All of this shows that $\stab(\cU)\cup\{0\}\subset\Fqn$ is closed under multiplication and addition,
making it a subfield, and in fact the smallest subfield containing $\stab(\cU)$.
Since $1\in\cU$, the stabilizer property shows immediately that $\stabp(\cU)$ is contained in~$\cU$ and
that~$\cU$ is a vector space over $\stabp(\cU)$ (see also Remark~\ref{R-ReprTh}).
\end{proof}

In the case where~$n$ is prime, the only proper subfield of~$\Fqn$ is~$\F$.
Thus, we have the following result.

\begin{cor}\label{C-nprime}
If~$n$ is prime, then $\stab(\cU)=\F^*$, and thus $|\orb(\cU)|=\frac{q^n-1}{q-1}$ for every proper subspace $\cU\subset\Fqn$.
\end{cor}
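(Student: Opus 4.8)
The plan is to reduce the statement to the situation of Lemma~\ref{L-Stabplus} and then invoke the elementary fact — essentially recalled just before the statement — that for prime~$n$ the only subfields of~$\Fqn$ containing~$\F$ are~$\F$ and~$\Fqn$ itself (intermediate fields correspond to divisors of~$n$).

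First I would dispose of the hypothesis $1\in\cU$ that Lemma~\ref{L-Stabplus} requires. Since $\cU$ is a nonzero proper subspace, fix $u\in\cU\setminus\{0\}$ and set $\tilde\cU:=\cU u^{-1}$, so that $1\in\tilde\cU$. Because multiplication in~$\Fqn$ is commutative, for every $\gamma\in\Fqns$ one has $\tilde\cU\gamma=\cU\gamma u^{-1}$, whence $\cU\gamma=\cU$ if and only if $\tilde\cU\gamma=\tilde\cU$; that is, $\stab(\cU)=\stab(\tilde\cU)$. Note also that $\tilde\cU$ is again a proper subspace, as it has the same $\F$-dimension as~$\cU$. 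Hence it suffices to prove $\stab(\tilde\cU)=\F^*$.

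Now I would apply Lemma~\ref{L-Stabplus} to~$\tilde\cU$: the set $\stabp(\tilde\cU)=\stab(\tilde\cU)\cup\{0\}$ is a subfield of~$\Fqn$ containing~$\F$, and it is contained in~$\tilde\cU$. By the classification of subfields recalled above, $\stabp(\tilde\cU)$ is either~$\F$ or~$\Fqn$; the latter is impossible, since it would force $\tilde\cU\supseteq\stabp(\tilde\cU)=\Fqn$, contradicting that~$\tilde\cU$ is proper. Therefore $\stabp(\tilde\cU)=\F$, i.e.\ $\stab(\cU)=\stab(\tilde\cU)=\F^*$. Finally, the orbit--stabilizer count recorded in~\eqref{e-primnota} yields $|\orb(\cU)|=(q^n-1)/|\stab(\cU)|=(q^n-1)/(q-1)$.

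I do not anticipate any genuine obstacle here: the corollary is essentially Lemma~\ref{L-Stabplus} combined with a triviality about prime-degree extensions. The only point that needs a moment's care is that Lemma~\ref{L-Stabplus} presupposes $1\in\cU$, so the claim must first be transported along the rescaling $\cU\mapsto\cU u^{-1}$, and one must verify — as in the second paragraph — that this rescaling changes neither the stabilizer nor, consequently, the orbit size.
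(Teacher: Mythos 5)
Your proof is correct and follows essentially the same route as the paper, which simply observes that for prime~$n$ the only subfield of~$\Fqn$ properly containing... rather, strictly between nothing and $\Fqn$ and containing~$\F$ is~$\F$ itself, so $\stabp(\cU)=\F$ and the orbit size follows from the orbit--stabilizer count. Your explicit reduction to $1\in\cU$ via the rescaling $\cU\mapsto\cU u^{-1}$ is exactly the normalization the paper sets up at the start of Section~3, so it is a careful spelling-out of the intended argument rather than a different one.
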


Let us now return to the general case where~$\beta$ is irreducible.
The containments $\stabb(\cU)\subseteq\stab(\cU)\subseteq\stabp(\cU)$ lead immediately to the following
situation for the non-primitive case.
\begin{cor}\label{C-Stabbetaplus}
For any irreducible~$\beta$, the subfield $\stabbp(\cU)$ is contained in~$\stabp(\cU)$.
Hence it is contained in~$\cU$  and~$\cU$ is a vector space over $\stabbp(\cU)$.
\end{cor}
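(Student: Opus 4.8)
The plan is to chase a short chain of containments, using only the defining minimality of the two stabilizer subfields together with Lemma~\ref{L-Stabplus}. No new computation is needed; everything is formal once the inclusions are set up in the right order.

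First I would record the elementary set-theoretic inclusion $\stabb(\cU)\subseteq\stab(\cU)$. This holds because $\ideal{\beta}$ is a subgroup of the full multiplicative group $\Fqns=\ideal{\alpha}$ for a primitive element~$\alpha$: any $\gamma\in\ideal{\beta}$ with $\cU\gamma=\cU$ is, in particular, an element of $\Fqns$ fixing~$\cU$, hence lies in $\stab(\cU)$. Next I would invoke minimality. By Definition~\ref{D-StabField}, $\stabbp(\cU)$ is the smallest subfield of~$\Fqn$ containing~$\F$ together with the set $\stabb(\cU)$. But $\stabp(\cU)$ is itself a subfield of~$\Fqn$ (this is the content of Lemma~\ref{L-Stabplus}, and in any case $\stabp(\cU)$ is by its definition the smallest subfield containing~$\F$ and $\stab(\cU)$), it contains~$\F$, and by the previous step it contains $\stabb(\cU)$. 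Thus $\stabp(\cU)$ is one of the subfields over which the minimum defining $\stabbp(\cU)$ is taken, and therefore $\stabbp(\cU)\subseteq\stabp(\cU)$.

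Finally I would propagate the conclusions of Lemma~\ref{L-Stabplus} down this chain. Under the standing hypothesis $1\in\cU$ of this part of the section, that lemma gives $\stabp(\cU)\subseteq\cU$ and presents~$\cU$ as a vector space over the field $\stabp(\cU)$, with scalar multiplication being the multiplication of~$\Fqn$. Since $\stabbp(\cU)$ is a subfield of $\stabp(\cU)$, restriction of scalars makes~$\cU$ a vector space over $\stabbp(\cU)$ as well, and combining the inclusions yields $\stabbp(\cU)\subseteq\stabp(\cU)\subseteq\cU$, as claimed. The only point that requires a little care is that the statements ``contained in~$\cU$'' rest on the assumption $1\in\cU$ inherited from Lemma~\ref{L-Stabplus}, so I would make sure that hypothesis is explicitly in force; the inclusion $\stabbp(\cU)\subseteq\stabp(\cU)$ itself, however, needs no such assumption. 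I do not expect any genuine obstacle here — the argument is purely a manipulation of ``smallest subfield containing a given set.''
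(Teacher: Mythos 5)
Your argument is correct and coincides with the paper's own (implicitly stated) proof: the paper likewise derives the corollary from the chain $\stabb(\cU)\subseteq\stab(\cU)\subseteq\stabp(\cU)$ together with the minimality in Definition~\ref{D-StabField} and the conclusions of Lemma~\ref{L-Stabplus} under the standing hypothesis $1\in\cU$. Your extra remark that only the containment $\stabbp(\cU)\subseteq\stabp(\cU)$ is independent of $1\in\cU$ is a fair point of care but not a departure from the paper's route.
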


The following example shows that the containment $\stabbp(\cU)\subseteq\stabp(\cU)$ may be strict.

\begin{exa}\label{E-F81}
Consider $\F=\F_3$ and $\Fqn=\F_{3^4}$.
Fix the primitive element~$\alpha$ with minimal polynomial $x^4+x+2$.
Consider $\beta:=\alpha^{16}$, which has order~$5$.
Then~$\beta$ is an irreducible element of~$\F_{3^4}$ because its minimal polynomial is found to be
$\mipo(\beta,\F)=x^4+x^3+x^2+x+1$.
Let~$\cU$ be the subfield~$\F_{3^2}$ (considered as a subspace of~$\F_{3^4}$).
Then clearly $\stabp(\cU)=\F_{3^2}$.
Moreover, since $1\in\cU$, any~$\gamma$ satisfying $\cU\gamma=\cU$ is already in~$\cU$.
But then the relative primeness of the orders of the groups~$\ideal{\beta}$ and $\F_{3^2}^*$ show that
$\stabb(\cU)=\{1\}$.
As a consequence, $\stabbp(\cU)=\F_3$.
Thus we see that $\stabbp(\cU)\subsetneq\stabp(\cU)$.
\end{exa}

Let us briefly consider an extreme case.
Since $\F_q^*$ is always contained in $\stab(\cU)$, we conclude that $|\stab(\cU)|\geq q-1$.
Using~\eqref{e-N} above, this leads to $|\orb(\cU)|=N\leq \frac{q^n-1}{q-1}$.
As a consequence, $N=q^n-1$ is possible only if $q=2$.
This also follows from Lemma~\ref{L-Stabplus}, because if $N=q^n-1$, then $\ideal{\alpha^N}=\{1\}$ and thus
$\stabp(\cU)=\{0,1\}=\F_2$.

\begin{prop}\label{P-Fqk}
Let~$\cU$ be a $k$-dimensional subspace of~$\Fqn$ such that $1\in\cU$.
Then
\[
    \frac{|\beta|}{\gcd(|\beta|,q^k-1)}\ \text{ divides }\ |\orbb(\cU)|.
\]
Assume now that~$k$ divides~$n$, and thus~$\F_{q^k}$ is a subfield of~$\Fqn$.
If~$\F_{q^k}^*\subseteq\ideal{\beta}$ then $\frac{|\beta|}{q^k-1}$ divides $|\orbb(\cU)|$.
Furthermore, $|\orbb(\cU)|= \frac{|\beta|}{q^k-1}$ if and only if $\cU=\F_{q^k}$.
Finally we have $\ds\big(\orbb(\F_{q^k})\big)=2k$.
\end{prop}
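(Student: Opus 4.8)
The plan is to treat the four assertions in order, each building on the previous, with Lemma~\ref{L-Stabplus} and the orbit--stabilizer identities~\eqref{e-N} doing most of the work.

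First, for the divisibility $\tfrac{|\beta|}{\gcd(|\beta|,q^k-1)}\mid |\orbb(\cU)|$: by~\eqref{e-N} we have $|\orbb(\cU)|=N$ where $\ideal{\beta^N}=\stabb(\cU)$, so it suffices to bound the order of the stabilizer. Since $1\in\cU$, Corollary~\ref{C-Stabbetaplus} (via Lemma~\ref{L-Stabplus}) tells us $\stabb(\cU)\subseteq\stabbp(\cU)\subseteq\cU$, and every $\gamma\in\stabb(\cU)$ lies in the subgroup $\ideal{\beta}\cap\cU^\ast$. The key observation is that $\stabbp(\cU)$ is a subfield of $\Fqn$ of $\F$-dimension at most $k$ (because it is contained in the $k$-dimensional space $\cU$), hence $\stabbp(\cU)^\ast$ has order dividing $q^k-1$; therefore $\stabb(\cU)\subseteq\stabbp(\cU)^\ast\cap\ideal{\beta}$ has order dividing $\gcd(q^k-1,|\beta|)$. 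Combined with $|\stabb(\cU)|=|\beta|/N$ this gives $N$ divisible by $|\beta|/\gcd(|\beta|,q^k-1)$, as claimed. I should note that a subfield's multiplicative group has order $q^d-1$ for some $d\mid n$ with $d\le k$, and $q^d-1\mid q^k-1$ iff $d\mid k$ need not hold — but the weaker divisibility $|\stabb(\cU)| \mid \gcd(q^k-1,|\beta|)$ only needs $|\stabb(\cU)|\mid q^k-1$, which follows since $|\stabbp(\cU)^\ast|=q^d-1\le q^k-1$ and $|\stabb(\cU)|$ divides it — wait, this requires $q^d-1\mid q^k-1$. Here is the fix: $\stabb(\cU)$ is a subgroup of the cyclic group $\stabbp(\cU)^\ast$ of order $q^d-1$, so $|\stabb(\cU)|\mid q^d-1$; and $d\mid n$; but I actually want $|\stabb(\cU)|\mid q^k-1$. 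Since $\stabb(\cU)$ is also a subgroup of $\ideal{\beta}$, it suffices to show $|\stabb(\cU)|\mid q^k-1$ directly: each $\gamma\in\stabb(\cU)$ generates $\F[\gamma]\subseteq\cU$ of dimension $\le k$, so $\gamma\in\F_{q^m}$ with $m\le k$ and $m\mid n$; and the exponent of the finite group $\stabb(\cU)$ (which is cyclic, being a subgroup of $\Fqns$) equals its order, generated by some single $\gamma_0$, so $|\stabb(\cU)|=\ord{\gamma_0}\mid q^m-1$ for that particular $m\le k$. This still needs $q^m-1\mid q^k-1$, i.e.\ $m\mid k$. The cleanest route: simply use $|\stabb(\cU)|\le q^k-1$ is false for divisibility. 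I will instead argue: $\stabb(\cU)\cup\{0\}$ generates a subfield of $\cU$, hence of $\F$-dimension $d$ dividing $\gcd(k',n)$ where $k'$ is... Let me just use that $\stabbp(\cU)=\F_{q^d}$ with $\F_{q^d}\subseteq\cU$, so $d\le k$; and $|\stabb(\cU)|$ divides $q^d-1$ and divides $|\beta|$, hence divides $\gcd(q^d-1,|\beta|)$ which divides $\gcd(q^k-1,|\beta|)$ provided $q^d-1\mid q^k-1$. Since $\F_{q^d}\subseteq\F_{q^k}$ is NOT guaranteed... the honest statement uses $d\le k$ and $q^d-1\le q^k-1$ is enough only if we want $N\ge |\beta|/(q^k-1)$, not exact divisibility. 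I expect the \emph{actual} argument in the paper uses $\varphi$ of~\eqref{e-FFn}: $\cU\gamma=\cU$ with $1\in\cU$ forces $\gamma=1\cdot\gamma\in\cU$, and then $\cU$ being $\F[\gamma]$-stable of dimension $k$ over $\F$ bounds $\ord{\gamma}$ by the largest order of an element of $\GL_k(\F)$ lying in a field, namely $q^k-1$. That is the clean bound: \textbf{every $\gamma\in\stabb(\cU)$ satisfies $\gamma^{q^k-1}=1$}, because $\cU$ is a $k$-dimensional $\F[\gamma]$-module so $\F[\gamma]$ embeds in $\mathrm{End}_\F(\cU)\cong\F^{k\times k}$ and a field inside $\F^{k\times k}$ has at most $q^k-1$ nonzero elements, forcing $\ord{\gamma}\mid q^k-1$. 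Hence $\stabb(\cU)$ has exponent dividing $q^k-1$, and being cyclic, order dividing $q^k-1$; combined with order dividing $|\beta|$, order dividing $\gcd(q^k-1,|\beta|)$, giving the first divisibility.

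Second, if $k\mid n$ and $\F_{q^k}^\ast\subseteq\ideal{\beta}$, then $\gcd(q^k-1,|\beta|)=q^k-1$ (since $q^k-1=|\F_{q^k}^\ast|$ divides $|\ideal{\beta}|=|\beta|$), so the first part yields $\tfrac{|\beta|}{q^k-1}\mid|\orbb(\cU)|$.

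Third, for the equality case: if $\cU=\F_{q^k}$ then $\stabb(\cU)\supseteq\F_{q^k}^\ast\cap\ideal{\beta}=\F_{q^k}^\ast$ (using the hypothesis), and conversely $\stabb(\cU)\subseteq\cU^\ast=\F_{q^k}^\ast$, so $\stabb(\cU)=\F_{q^k}^\ast$, $|\stabb(\cU)|=q^k-1$, and $|\orbb(\cU)|=|\beta|/(q^k-1)$. For the reverse implication, suppose $|\orbb(\cU)|=|\beta|/(q^k-1)$, i.e.\ $|\stabb(\cU)|=q^k-1$. Then the cyclic group $\stabb(\cU)$ has order $q^k-1$, so its elements are exactly the roots of $x^{q^k-1}-1$, i.e.\ $\stabb(\cU)\cup\{0\}$ is a field with $q^k$ elements; call it $\F_{q^k}$. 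By Corollary~\ref{C-Stabbetaplus} this field is contained in $\cU$, and since both have $\F$-dimension $k$, we get $\cU=\F_{q^k}$. The main subtlety here is ensuring that $\stabb(\cU)\cup\{0\}$ really is closed under addition; this is exactly the argument of Lemma~\ref{L-Stabplus} (the subfield $\stabbp(\cU)$), and since $\stabbp(\cU)$ contains $q^k-1$ nonzero elements and lies in the $k$-dimensional space $\cU$, it must equal $\cU$ and be the field $\F_{q^k}$.

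Finally, $\ds(\orbb(\F_{q^k}))=2k$: write $\cU=\F_{q^k}$ and take any $i$ with $\cU\beta^i\ne\cU$. Then $\cU\cap\cU\beta^i$ is an $\F$-subspace; I claim it is $\{0\}$. Indeed if $0\ne x\in\cU\cap\cU\beta^i$ then $x\in\F_{q^k}$ and $x\beta^{-i}\in\F_{q^k}$, so $\beta^i=x(x\beta^{-i})^{-1}\in\F_{q^k}$, which since $1\in\F_{q^k}$ would give $\F_{q^k}\beta^i=\F_{q^k}$, contradicting $\cU\beta^i\ne\cU$. Hence $\dim(\cU\cap\cU\beta^i)=0$ and $\ds(\cU,\cU\beta^i)=2k$ for all such $i$; together with the general upper bound $\ds(\orbb(\cU))\le 2k$ from~\eqref{e-distC} (valid since $k\le n/2$; note $k\mid n$ and $k<n$ unless $\cU=\Fqn$, which is excluded as $\cU$ is proper), this gives $\ds(\orbb(\F_{q^k}))=2k$. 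The only place needing a word of care is that the orbit is nontrivial, i.e.\ contains at least two subspaces, which holds because $|\orbb(\cU)|=|\beta|/(q^k-1)>1$ as $|\beta|>q^k-1$ (since $\beta$ is irreducible of degree $n>k$, so $|\beta|$ is a multiple of... well, at minimum $\ideal{\beta}\supsetneq\F_{q^k}^\ast$, hence the quotient exceeds $1$).

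I expect the third step (the equality characterization, specifically the reverse implication) to be the main obstacle, since it requires recognizing a cyclic group of order $q^k-1$ sitting inside $\cU$ as forcing $\cU$ to be the field $\F_{q^k}$ — this is where the field-theoretic structure from Lemma~\ref{L-Stabplus} is essential rather than decorative.
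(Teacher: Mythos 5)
Your overall route is the paper's: bound the stabilizer via the subfield structure and apply the orbit--stabilizer identities~\eqref{e-N}; characterize the equality case by identifying $\stabb(\cU)$ with $\F_{q^k}^*$ (uniqueness of the subgroup of order $q^k-1$ in a cyclic group) together with $1\in\cU$ and a dimension count; and obtain the distance $2k$ from trivial pairwise intersections. However, the one step you visibly struggled with is, as finally written, not justified: from ``$\F[\gamma]$ embeds in $\mathrm{End}_\F(\cU)\cong\F^{k\times k}$, and a field inside $\F^{k\times k}$ has at most $q^k-1$ nonzero elements'' you conclude $\ord{\gamma}\mid q^k-1$. A cardinality bound $|\F[\gamma]|\le q^k$ gives only $\ord{\gamma}\le q^k-1$, not divisibility --- which is exactly the $d\mid k$ issue you raised earlier and never actually resolved. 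The missing (one-line) ingredient is that $\cU$ is a \emph{vector space} over the field $\F[\gamma]$: since $\gamma$ and $\F$ both stabilize $\cU$ and $\F[\gamma]$ is a field, $k=\dim_{\F}\cU=[\F[\gamma]:\F]\cdot\dim_{\F[\gamma]}\cU$, so $[\F[\gamma]:\F]$ divides $k$, hence $\ord{\gamma}$ divides $q^{[\F[\gamma]:\F]}-1$, which divides $q^k-1$. This is precisely how the paper argues, once and for all at the level of the stabilizer subfield: by Corollary~\ref{C-Stabbetaplus}, $\stabbp(\cU)=\F_{q^r}$ is a friend of~$\cU$, so $r\mid k$ and $q^r-1\mid q^k-1$, and $\stabb(\cU)$, being a subgroup of $\F_{q^r}^*\cap\ideal{\beta}$, has order dividing $\gcd(|\beta|,q^k-1)$.

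The remaining parts are sound and close to the paper's proof. Your ``$\Leftarrow$'' of the equality (compute $\stabb(\F_{q^k})=\F_{q^k}^*$ directly, using $1\in\cU$ to get $\stabb(\cU)\subseteq\cU$) is a slightly cleaner variant of the paper's argument, which instead shows $\F_{q^k}\beta^D=\F_{q^k}$ and that the subspaces $\F_{q^k}\beta^i$, $0\le i<D$, meet pairwise in $\{0\}$; the latter simultaneously yields the distance statement, which you prove separately by the same trivial-intersection computation. Your ``$\Rightarrow$'' matches the paper's (identify the stabilizer as $\F_{q^k}^*$, then use containment in $\cU$ and equal dimensions). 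Your attention to nondegeneracy ($k<n$, orbit of size greater than one) addresses a boundary point the paper leaves implicit.
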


\begin{proof}
From Corollary~\ref{C-Stabbetaplus} we know that $\stabbp(\cU)=\F_{q^r}$ for some~$r$ and that~$\cU$
is a vector space over~$\F_{q^r}$.
Thus~$r$ divides~$k$ and so $q^r-1$ divides $q^k-1$.
Moreover, since $\stabb(\cU)$ is a subgroup of~$\F_{q^r}^*\cap\ideal{\beta}$ its order divides
$q^r-1$ as well as~$|\beta|$.
All of this shows that $|\stabb(\cU)|$ divides $\gcd(|\beta|,q^k-1)$, and now the first statement follows from
the identities in~\eqref{e-N}.

For the second part, note first that by assumption, $q^k-1$ divides $|\beta|$.
Thus the first statement is just a special case of the previous part.
For the rest, set $D:= \frac{|\beta|}{q^k-1}$.
\\
\noindent"$\Rightarrow$"
With the notation  as in~\eqref{e-N}, we have $D=N$.
Since $|\beta^N|=\frac{|\beta|}{\gcd(N,|\beta|)}=\frac{|\beta|}{N}=q^k-1$, the uniqueness of subgroups of a cyclic group gives us
$\ideal{\beta^N}=\F_{q^k}^*$.
Now the fact that $\ideal{\beta^N}=\stabb(\cU)$ with Corollary~\ref{C-Stabbetaplus} imply $\stabbp(\cU)=\F_{q^k}\subseteq \cU$.
Thus $\F_{q^k}= \cU$ due to dimension.
\\
\noindent"$\Leftarrow$"  Let $u\in\F_{q^k}^*$.
Then $(u\beta^D)^{q^k-1} = u^{q^k-1}\beta^{D\cdot (q^k-1)}=1\cdot 1=1$.
Since the nonzero elements of~$\F_{q^k}$ are exactly the roots of $x^{q^k-1}-1$ in~$\Fqn$,
we obtain $\F_{q^k}\beta^D=\F_{q^k}$.
Hence $|\orbb(\F_{q^k})|\leq D$.
Let $0\leq i< j< D$ and let $\gamma \in \F_{q^k}\beta^i \cap \F_{q^k}\beta^j$ with $\gamma \neq 0$.
Then $\gamma = \gamma_i\beta^i = \gamma_j\beta^j$, for some $\gamma_i, \gamma_j \in \F_{q^k}^*$.
But then $\beta^{j-i}=\gamma_i\gamma_j^{-1}\in\F_{q^k}^*$.
So $j-i\equiv 0$ mod $D$, which is impossible.
Thus $\F_{q^k}\beta^i \cap \F_{q^k}\beta^j = \{0\}$.
This shows that $|\orbb(\F_{q^k})| = D$, and using~\eqref{e-dist} we also see that $\ds\big(\orbb(\F_{q^k})\big)=2k$.
\end{proof}

We have the following special case of the previous result.
Recall the notation from~\eqref{e-primnota}.

\begin{cor}\label{C-Fqk}
Let~$\cU$ be a $k$-dimensional subspace of~$\Fqn$ such that $1\in\cU$. Then
\[
     |\orb(\cU)| = \frac{q^n-1}{q^k-1} \Longleftrightarrow \cU = \F_{q^k}.
\]
Furthermore, $\ds\big(\orb(\F_{q^k})\big)=2k$.
\end{cor}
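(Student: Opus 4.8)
The plan is to derive this corollary directly from Proposition~\ref{P-Fqk} by specializing $\beta$ to a primitive element~$\alpha$, so that $\ideal{\beta}=\Fqns$ and $|\beta|=q^n-1$. First I would observe that with this choice the hypothesis ``$\F_{q^k}^*\subseteq\ideal{\beta}$'' of the second half of Proposition~\ref{P-Fqk} is automatically satisfied whenever $\F_{q^k}$ is a subfield of $\Fqn$, since $\ideal{\alpha}$ is the full multiplicative group. However, the statement of the corollary does not assume $k\mid n$, so before invoking the proposition I must handle the case where $k$ does not divide~$n$: then $\F_{q^k}$ is \emph{not} a subfield of $\Fqn$, so the right-hand side ``$\cU=\F_{q^k}$'' is vacuous, and I only need to rule out $|\orb(\cU)|=\frac{q^n-1}{q^k-1}$. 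This follows from the first part of Proposition~\ref{P-Fqk} together with Corollary~\ref{C-Stabbetaplus}: the best-friend subfield $\stabp(\cU)=\F_{q^r}$ has $r\mid k$ \emph{and} $r\mid n$, hence $r\mid\gcd(k,n)$, which is a proper divisor of~$k$; since $|\stab(\cU)|$ divides $q^r-1$, the orbit length $L=\frac{q^n-1}{|\stab(\cU)|}$ is a multiple of $\frac{q^n-1}{q^r-1}>\frac{q^n-1}{q^k-1}$, so equality with $\frac{q^n-1}{q^k-1}$ is impossible.

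With the divisibility case disposed of, assume $k\mid n$. Then $\F_{q^k}$ is a subfield of $\Fqn$, $q^k-1$ divides $q^n-1=|\alpha|$, and $\F_{q^k}^*\subseteq\ideal{\alpha}$ trivially. The equivalence $|\orb(\cU)|=\frac{q^n-1}{q^k-1}\Longleftrightarrow\cU=\F_{q^k}$ is then exactly the ``if and only if'' clause of Proposition~\ref{P-Fqk} with $D=\frac{|\beta|}{q^k-1}=\frac{q^n-1}{q^k-1}$, and the distance claim $\ds(\orb(\F_{q^k}))=2k$ is the final sentence of that proposition. So in this branch there is essentially nothing to prove beyond citing Proposition~\ref{P-Fqk}.

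I expect the only genuine subtlety to be the $k\nmid n$ case, which is easy to overlook because the corollary is phrased as a clean equivalence without the divisibility hypothesis present in the proposition; the point is simply that when $\F_{q^k}$ fails to be a subfield one must argue that the orbit length $\frac{q^n-1}{q^k-1}$ is \emph{never} attained, using $\gcd(k,n)<k$. Everything else is a direct appeal to the already-proved Proposition~\ref{P-Fqk}, and no new computation is needed.
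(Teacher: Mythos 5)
Your proposal is correct and takes essentially the same approach as the paper: the paper gives no separate proof, presenting the corollary precisely as the specialization of Proposition~\ref{P-Fqk} to a primitive element $\beta=\alpha$ (so $|\beta|=q^n-1$ and $\F_{q^k}^*\subseteq\ideal{\alpha}$ automatically), which is your main argument. Your additional handling of the case $k\nmid n$, ruling out orbit size $\frac{q^n-1}{q^k-1}$ via $r\mid\gcd(k,n)<k$ for the best-friend exponent $r$, is a legitimate tidying-up of a point the paper leaves implicit rather than a different method.
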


\begin{rem}\label{R-Spread}
A subspace code~$\cC$ is called a \emph{spread} of~$\Fqn$ if $\cup_{\cV\in\cC}\cV=\Fqn$ and
$\cV\cap\cW=\{0\}$ for all distinct $\cV,\,\cW\in\cC$.
If only $\cV\cap\cW=\{0\}$ is true for all distinct $\cV,\,\cW\in\cC$, then~$\cC$ is called a \emph{partial spread}.
The previous result shows that $\orb(\F_{q^k})$ is a $k$-dimensional spread, and
$\orbb(\F_{q^k})$ is a partial spread for any irreducible element~$\beta$.
This result is also found in \cite[Thm.~11, Cor.~12]{RoTr13}.
\end{rem}

Let us return to Lemma~\ref{L-Stabplus}.
The following notion will arise repeatedly, so we introduce terminology for convenience.\footnote{
The best friend as in Definition~\ref{D-friend} is what is called the stabilizer subfield in the title of this paper.
However, since the terminology `friend' will also be needed frequently, we prefer `best friend' over the
more technical `stabilizer subfield'.}

\begin{defi}\label{D-friend}
Let~$\cU$ be a subspace of~$\Fqn$.
A subfield~$\F_{q^r}$ of $\Fqn$ is called a \emph{friend} of~$\cU$ if~$\cU$ is a
vector space over~$\F_{q^r}$ with scalar multiplication being the multiplication in the field $\F_{q^n}$.
The largest friend of~$\cU$  (with respect to cardinality) is called the \emph{best friend} of~$\cU$.
\end{defi}

Note that since~$\cU$ is a subspace of the $\F$-vector space~$\Fqn$, the field~$\F$ is a friend of~$\cU$,
and thus~$\cU$ also has a best friend.
\begin{rem}\label{R-BF}
For any subspace~$\cU$ of~$\Fqn$ and any friend~$\F_{q^r}$ of~$\cU$ we have $1\in\cU\Longleftrightarrow \F_{q^r}\subseteq\cU$.
\end{rem}

\begin{prop}\label{T-StabBF}
Let~$\cU$ be a $k$-dimensional subspace of~$\Fqn$ with $1\in\cU$.
Then the subfield $\stabp(\cU)$ is the best friend of~$\cU$.
Furthermore, any friend of~$\cU$ is contained in the best friend.
\end{prop}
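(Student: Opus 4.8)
The plan is to show two things: first, that $\stabp(\cU)$ is itself a friend of~$\cU$, and second, that it contains every friend of~$\cU$; together these give that $\stabp(\cU)$ is the (unique) largest friend, i.e.\ the best friend. The first point is immediate from Lemma~\ref{L-Stabplus}: since $1\in\cU$, that lemma tells us $\stabp(\cU)=\stab(\cU)\cup\{0\}$ is a subfield of~$\Fqn$ containing~$\F$, and that~$\cU$ is a vector space over it with scalar multiplication inherited from~$\Fqn$; this is precisely the definition of a friend.

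For the second point, let $\F_{q^r}$ be an arbitrary friend of~$\cU$. I would argue that $\F_{q^r}^*\subseteq\stab(\cU)$, which then gives $\F_{q^r}=\F_{q^r}^*\cup\{0\}\subseteq\stab(\cU)\cup\{0\}=\stabp(\cU)$. To see the inclusion of multiplicative groups, fix $\gamma\in\F_{q^r}^*$. Since $1\in\cU$ and $\F_{q^r}$ is a friend of~$\cU$, Remark~\ref{R-BF} gives $\F_{q^r}\subseteq\cU$; more importantly, because~$\cU$ is closed under multiplication by elements of~$\F_{q^r}$ (that being what the $\F_{q^r}$-vector-space structure means, with scalar multiplication the field multiplication), we get $\cU\gamma\subseteq\cU$. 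As $\gamma$ is a unit, multiplication by~$\gamma$ is an $\F$-linear bijection of~$\Fqn$, so $\dim_\F(\cU\gamma)=\dim_\F\cU$, forcing $\cU\gamma=\cU$, i.e.\ $\gamma\in\stab(\cU)$. Hence $\F_{q^r}\subseteq\stabp(\cU)$, as desired.

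Finally, the two points combine: $\stabp(\cU)$ is a friend of~$\cU$, and every friend is contained in it, so in particular every friend has cardinality at most $|\stabp(\cU)|$ and $\stabp(\cU)$ attains this bound; thus $\stabp(\cU)$ is the best friend, and the ``furthermore'' clause — that any friend is contained in the best friend — has been established along the way.

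I do not anticipate a serious obstacle here; the argument is essentially unwrapping definitions together with the orbit/stabilizer bookkeeping already packaged in Lemma~\ref{L-Stabplus}. The one point that deserves care is the direction of the containment in the stabilizer argument: one must use that multiplication by a nonzero~$\gamma$ is injective (hence, by finite dimension, surjective) to upgrade $\cU\gamma\subseteq\cU$ to $\cU\gamma=\cU$ — exactly the same ``$\subseteq$ suffices'' observation already recorded in~\eqref{e-stabU} — rather than trying to produce $\cU\gamma=\cU$ directly. It is also worth noting explicitly why the best friend is well defined as a single subfield: a priori ``largest by cardinality'' could be ambiguous, but the second point shows all friends sit inside $\stabp(\cU)$, so there is genuinely a unique maximal one.
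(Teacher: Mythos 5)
Your proposal is correct and follows essentially the same route as the paper: Lemma~\ref{L-Stabplus} shows $\stabp(\cU)$ is a friend, and for any friend $\F_{q^r}$ one uses closure of scalar multiplication to get $\F_{q^r}^*\subseteq\stab(\cU)$, hence $\F_{q^r}\subseteq\stabp(\cU)$. You merely spell out more explicitly the step from $\cU\gamma\subseteq\cU$ to $\cU\gamma=\cU$, which the paper leaves implicit in the equality recorded in~\eqref{e-stabU}.
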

\begin{proof}
We know from Lemma~\ref{L-Stabplus} that $\stabp(\cU)$ is a friend of~$\cU$.
Moreover, if~$\F_{q^l}$ is a friend of~$\cU$, then $\cU\gamma=\cU$ for all $\gamma\in\F_{q^l}^*$ by closure
of the scalar multiplication.
This implies $\F_{q^l}^*\subseteq\stab(\cU)$, hence $\F_{q^l}\subseteq\stabp(\cU)$.
\end{proof}

It is a consequence of the last result that all subspaces in~$\orb(\cU)$ have the same best friend, say~$\F_{q^r}$, and we
may therefore call~$\F_{q^r}$ the \emph{best friend of the orbit code}.

Example~\ref{E-F81} shows that we do not have an analogous characterization for $\stabbp(\cU)$, when~$\beta$ is not primitive.

The identities in~\eqref{e-N} now read as follows.
\begin{cor}\label{C-cardinality}
Let $\F_{q^r}$ be the best friend of $\cU$. Then
\[
    |\orb(\cU)|=\frac{q^n-1}{q^r-1}\ \text{ and }\ |\stab(\cU)|=q^r-1.
\]
\end{cor}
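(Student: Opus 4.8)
The plan is to chain together the three results that precede the corollary. The key observation is that Proposition~\ref{T-StabBF} identifies the best friend of a subspace containing $1$ with its stabilizer subfield $\stabp(\cU)$. Since the statement names the best friend $\F_{q^r}$, this immediately gives $\stabp(\cU)=\F_{q^r}$. So the first step is just to quote Proposition~\ref{T-StabBF}, noting that its hypothesis $1\in\cU$ is exactly the standing hypothesis of the corollary.

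Next I would invoke Lemma~\ref{L-Stabplus}, which (again using $1\in\cU$) asserts $\stabp(\cU)=\stab(\cU)\cup\{0\}$. Combining this with $\stabp(\cU)=\F_{q^r}$ and counting elements on both sides yields $|\stab(\cU)|=|\F_{q^r}|-1=q^r-1$, which is the second claimed identity. Finally, the cardinality of the orbit is read off from the orbit--stabilizer count already recorded in~\eqref{e-N} and restated in~\eqref{e-primnota}: $|\orb(\cU)|=(q^n-1)/|\stab(\cU)|=(q^n-1)/(q^r-1)$.

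There is no real obstacle here; the corollary is purely a bookkeeping consequence of Proposition~\ref{T-StabBF}, Lemma~\ref{L-Stabplus}, and the orbit--stabilizer theorem. The only point requiring a moment's care is that both Lemma~\ref{L-Stabplus} and Proposition~\ref{T-StabBF} need $1\in\cU$, but this is assumed throughout the relevant portion of the section and is inherited by the corollary, so nothing extra has to be verified.
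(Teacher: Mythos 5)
Your proposal is correct and matches the paper's (implicit) argument: the paper presents this corollary as a direct restatement of the identities in~\eqref{e-N}/\eqref{e-primnota} once Proposition~\ref{T-StabBF} and Lemma~\ref{L-Stabplus} identify the best friend $\F_{q^r}$ with $\stab(\cU)\cup\{0\}$, exactly as you chain them. Your remark about the standing hypothesis $1\in\cU$ is also the right observation and is all the care the step requires.
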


This result facilitates the design of orbit codes with a prescribed cardinality.

In the following we use $\dim_{\F_{q^l}}(\cU)$ for the dimension of a vector space~$\cU$ over the field~$\F_{q^l}$.
We also set  $\dim\cU:=\dim_{\F}\cU$.

\begin{exa}\label{E-EtVa1}
The following subspace~$\cU$ is taken from \cite[Ex.~1]{EtVa11}, where the distance and cardinality of the resulting
orbit code have been determined by straightforward testing and enumeration.
Consider $\F=\F_2$ and the field $\F_{2^6}$ with primitive element $\alpha$ having minimal polynomial $x^6+x+1\in\F[x]$.
Let $\cU:=\{0,\alpha^0,\alpha^1,\alpha^4,\alpha^6,\,\alpha^{16},\alpha^{24},\alpha^{33}\}$.
It is straightforward to check that this is a vector space over~$\F$ (generated by, for instance, $\{1,\alpha,\alpha^4\}$).
Using the isomorphism $\varphi:\sum_{i=0}^5 a_i\alpha^i\longmapsto (a_0,\ldots,a_5)$ between the vector spaces~$\F_{2^6}$ and
$\F_2^6$, see~\eqref{e-FFn}, the subspace~$\varphi(\cU)$ is given by
\[
   \varphi(\cU)=\im\begin{pmatrix}1&0&0&0&0&0\\0&1&0&0&0&0\\0&0&0&0&1&0\end{pmatrix}.
\]
Since $\dim\cU=3$ it is clear that~$\cU$ is not a vector space over the subfield $\F_{2^2}$.
Furthermore,~$\cU$ does not coincide with the subfield $\F_{2^3}$ because the latter contains the element~$\alpha^9$.
All of this shows that~$\F$ is the best friend of~$\cU$ and thus $|\orb(\cU)|=2^6-1=63$ by the last corollary.
\end{exa}

\begin{exa}\label{E-EtVa2}
Consider the field $\F_{2^{12}}$ and the primitive element~$\alpha$ with minimal polynomial
$x^{12}+x^7+x^6+x^5+x^3+x+1\in\F_2[x]$.
\begin{alphalist}
\item Since $\deg(\mipo(\alpha,\F_{2^2}))=6$, it is clear that
      $\cU:=\F_{2^2}+\alpha\F_{2^2}+\alpha^3\F_{2^2}$ is a direct sum, and thus~$\cU$
      is a $6$-dimensional subspace of the $\F_2$-vector
      space~$\F_{2^{12}}$.
      Obviously $\F_{2^2}$ is a friend of~$\cU$.
      Furthermore,~$\cU\neq\F_{2^6}$ because $\alpha\in\cU$, but~$\alpha\not\in\F_{2^6}$.
      Along with Proposition~\ref{T-StabBF}, all of this shows that~$\F_{2^2}$ is the best friend of~$\cU$ and thus
      $|\orb(\cU)|=(2^{12}-1)/(2^2-1)=1365$.
\item Similarly $\cW:=\F_{2^4}+\alpha\F_{2^2}$ is a $6$-dimensional subspace of~$\F_{2^{12}}$ with best friend $\F_{2^2}$.
      Thus $|\orb(\cW)|=1365$.
\end{alphalist}
\end{exa}

\section{The Subspace Distance of Cyclic Orbit Codes}
In the previous section we determined the cardinality of an orbit code in terms of the best friend.
Now we turn to the minimum distance of these codes, again making use of the best friend.

Throughout this section, we restrict ourselves to orbit codes with respect to the cyclic group~$\Fqn^*$.
Thus we fix a primitive element~$\alpha\in\Fqn^*$.
Moreover, let~$\cU$ be a $k$-dimensional subspace of~$\Fqn$.
We usually assume $k\leq n/2$ (see Remark~\ref{R-Cperp}), but will not make explicit use of this assumption.
Recall that the orbit code~$\orb(\cU)=\orb_{\alpha}(\cU)$  contains a subspace $\cU'$ such that $1\in\cU'$.
Therefore, we may assume without loss of generality that $1\in\cU$.
Finally, let~$\F_{q^r}$ be the best friend of~$\cU$, and define
\[
   t:=\frac{k}{r}=\dim_{\F_{q^r}}\cU.
\]
From Corollary~\ref{C-cardinality} we know that the cardinality of $\orb(\cU)$ is given by $N:=\frac{q^n-1}{q^r-1}$.
For the study of the subspace distance, we note that for orbit codes the minimum distance is given by
\[
  \ds(\orb(\cU))=\min\{\ds(\cU,\,\cU\alpha^j)\mid 1\leq j<|\orb(\cU)|\}.
\]
This follows directly from the identity $\ds(\cU\alpha^l,\cU\alpha^m)=\ds(\cU,\cU\alpha^{m-l})$.

\subsection{Bounds on the Subspace Distance via the Stabilizer}\label{SS-DistStab}
The following lemma serves to show the simple fact that the subspace distance is a
multiple of~$2r$.
This is so because the intersection of any two subspaces in $\orb(\cU)$ is a vector space over~$\F_{q^r}$.
\begin{lemma}\label{L-distance}
Define $s:=\max_{1\leq j<N}\dim_{\F_{q^r}}(\cU \cap \cU\alpha^j)$.
Then
\begin{equation}\label{e-dists}
   \ds(\orb(\cU)) = 2(k - sr)=2r(t-s).
\end{equation}
As a consequence,
\[
  2r\leq \ds\big(\orb(\cU)\big)\leq 2k.
\]
\end{lemma}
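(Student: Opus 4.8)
The plan is to exploit exactly the observation quoted before the lemma: every subspace in the orbit, and hence every pairwise intersection with~$\cU$, is an $\F_{q^r}$-vector space, so its $\F$-dimension is divisible by~$r$. First I would record that the friend property is inherited by the orbit: if~$\F_{q^r}$ is a friend of~$\cU$, then it is a friend of~$\cU\alpha^j$ for every~$j$, since for $u\in\cU$ and $\lambda\in\F_{q^r}$ we have $\lambda(u\alpha^j)=(\lambda u)\alpha^j\in\cU\alpha^j$ (using $\lambda u\in\cU$), and closure under addition is clear. Consequently $\cW_j:=\cU\cap\cU\alpha^j$ is an $\F_{q^r}$-subspace of~$\Fqn$, and by the tower law for dimensions $\dim_\F\cW_j=r\,\dim_{\F_{q^r}}\cW_j$.

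Next I would feed this into the distance formula. All subspaces in $\orb(\cU)$ have $\F$-dimension~$k$, so by~\eqref{e-dist} (in the constant-dimension form recorded in Section~\ref{SS-Prelim}) we get $\ds(\cU,\cU\alpha^j)=2\big(k-\dim_\F\cW_j\big)=2\big(k-r\,\dim_{\F_{q^r}}\cW_j\big)$. Using the reduction $\ds(\orb(\cU))=\min\{\ds(\cU,\cU\alpha^j)\mid 1\leq j<N\}$ established just before the lemma (which itself rests on $\ds(\cU\alpha^l,\cU\alpha^m)=\ds(\cU,\cU\alpha^{m-l})$) and pulling the minimum through, I obtain
\[
   \ds(\orb(\cU))=2\Big(k-r\max_{1\leq j<N}\dim_{\F_{q^r}}\cW_j\Big)=2(k-sr)=2r(t-s),
\]
which is~\eqref{e-dists}, since $k=tr$.

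Finally, for the two-sided bound I would argue $0\leq s\leq t-1$. Nonnegativity of~$s$ is immediate and gives $\ds(\orb(\cU))=2r(t-s)\leq 2rt=2k$. For the strict inequality $s<t$: if $\dim_{\F_{q^r}}\cW_j=t$ for some~$j$ with $1\leq j<N$, then $\cW_j=\cU$ because $\cW_j\subseteq\cU$ and the two $\F_{q^r}$-dimensions agree; hence $\cU\subseteq\cU\alpha^j$, and so $\cU=\cU\alpha^j$ by equality of $\F$-dimensions. But the $N$ subspaces $\cU,\cU\alpha,\dots,\cU\alpha^{N-1}$ are pairwise distinct (that is the content of $|\orb(\cU)|=N$, see~\eqref{e-primnota} together with Corollary~\ref{C-cardinality}), so $\cU\alpha^j\neq\cU$ for $1\leq j<N$ --- a contradiction. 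Thus $t-s\geq 1$ and $\ds(\orb(\cU))=2r(t-s)\geq 2r$.

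The whole argument is short; the only step needing a moment's care is the strict inequality $s<t$, i.e.\ that a genuinely different orbit element cannot meet~$\cU$ in an $\F_{q^r}$-subspace of full dimension~$t$. Everything else is the tower law plus the standard constant-dimension distance identity, so I do not anticipate a real obstacle.
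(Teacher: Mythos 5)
Your proposal is correct and follows essentially the same route as the paper: both arguments rest on the observation that each intersection $\cU\cap\cU\alpha^j$ is an $\F_{q^r}$-vector space (so its $\F$-dimension is $r$ times its $\F_{q^r}$-dimension), combined with the constant-dimension distance identity and the fact that $\cU\alpha^j\neq\cU$ for $1\leq j<N$ forces $s_j\leq t-1$. You merely spell out two points the paper leaves implicit (the inheritance of the friend by $\cU\alpha^j$, and the argument that a full-dimensional intersection would force $\cU=\cU\alpha^j$), which is fine.
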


Of course, the upper bound $\ds\big(\orb(\cU)\big)\leq 2k$ is true for all constant dimension codes of dimension~$k$
(which by assumption is at most $n/2$) as we saw already in~\eqref{e-distC}.

\begin{proof}
Let $1\leq j< N$.
Clearly, $\cU\alpha^j$ and thus $\cU \cap \cU\alpha^j$ are vector spaces over~$\F_{q^r}$.
Let $s_j:=\dim_{\F_{q^r}}(\cU \cap \cU\alpha^j)$.
Since $1\leq j< N$, we know $\cU \neq \cU\alpha^j$, and therefore $0\leq s_j<t$.
Thus, $\ds(\cU,\cU\alpha^j) = 2(k - \dim(\cU\cap\cU \alpha^j))=2(k-s_jr)\geq2r(t-s)\geq2r$.
\end{proof}

Comparing the lower bound~$2r$ with Corollary~\ref{C-cardinality}, we observe the usual trade-off between the cardinality
of an orbit code and its (potential) distance:
the larger the best friend, the smaller the code, but the better the lower bound for the distance.

\begin{cor}\label{C-Spread}
For any orbit code $\orb(\cU)$ we have
\[
     \ds(\orb(\cU))=2k\Longleftrightarrow r=k\Longleftrightarrow\cU=\F_{q^k}.
\]
If any (hence all) of these properties are true, then~$\orb(\cU)$ is a spread code.
\end{cor}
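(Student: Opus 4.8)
The plan is to chain the three stated equivalences and then invoke the earlier results on spread codes. The implications are short because almost all the work has already been done in Lemma~\ref{L-distance} and Corollary~\ref{C-Fqk}. First I would prove $\ds(\orb(\cU))=2k\Longleftrightarrow r=k$. From~\eqref{e-dists} we have $\ds(\orb(\cU))=2r(t-s)$ with $t=k/r$ and $0\le s<t$. If $r=k$, then $t=1$, forcing $s=0$, so $\ds(\orb(\cU))=2r=2k$. Conversely, if $\ds(\orb(\cU))=2k=2rt$, then $2r(t-s)=2rt$ forces $s=0$, so the bound $\ds(\orb(\cU))=2r(t-s)=2rt\le 2r\cdot t$ is tight only when... actually more directly: $2r(t-s)=2k=2rt$ gives $t-s=t$, i.e.\ $s=0$; but this alone does not give $r=k$. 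Instead I would argue: $\ds(\orb(\cU))\le 2k$ always, with equality forcing, via $\ds=2r(t-s)\le 2rt=2k$, that we cannot conclude $r=k$ from $s=0$ alone. So the cleaner route is to use the upper bound differently, or simply to route everything through Corollary~\ref{C-Fqk}.

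The streamlined approach: I would prove $r=k\Longleftrightarrow\cU=\F_{q^k}$ first. If $\cU=\F_{q^k}$, then $\F_{q^k}$ is a friend of~$\cU$ of dimension $k=\dim\cU$, hence it is the best friend (a friend cannot have dimension exceeding $\dim\cU$), so $r=k$. Conversely, if $r=k$, then by Proposition~\ref{T-StabBF} the best friend $\stabp(\cU)=\F_{q^r}=\F_{q^k}$ is contained in~$\cU$, and by dimension $\cU=\F_{q^k}$. Next, $\cU=\F_{q^k}\Longleftrightarrow\ds(\orb(\cU))=2k$: the forward direction is exactly the last assertion of Corollary~\ref{C-Fqk}, namely $\ds(\orb(\F_{q^k}))=2k$. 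For the backward direction, suppose $\ds(\orb(\cU))=2k$; by Lemma~\ref{L-distance} this reads $2r(t-s)=2k=2rt$, so $s=0$, meaning $\cU\cap\cU\alpha^j=\{0\}$ for all $1\le j<N$. But then $\orb(\cU)$ is a partial spread of $N=\frac{q^n-1}{q^r-1}$ subspaces of dimension $k$, and by a counting argument $N(q^k-1)\le q^n-1$ forces $q^r-1\ge q^k-1$, i.e.\ $r\ge k$; since also $r\le k$ (as $r\mid k$ and $\F_{q^r}\subseteq\cU$ with $\dim\cU=k$), we get $r=k$, hence $\cU=\F_{q^k}$ by the equivalence just shown. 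This closes the cycle of three equivalences.

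Finally, for the last sentence: if these equivalent conditions hold, then $\cU=\F_{q^k}$, and Corollary~\ref{C-Fqk} (together with Remark~\ref{R-Spread}) tells us $|\orb(\F_{q^k})|=\frac{q^n-1}{q^k-1}$ with all pairwise intersections trivial; since $k\mid n$ (forced by $\F_{q^k}$ being a subfield of $\F_{q^n}$) the union of all $\frac{q^n-1}{q^k-1}$ translates $\F_{q^k}\alpha^i$ covers every nonzero element of $\F_{q^n}$, so $\orb(\cU)$ is a spread code.

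I expect the only genuinely substantive point to be the backward implication $\ds(\orb(\cU))=2k\Rightarrow r=k$: the distance formula gives $s=0$ immediately, but translating ``maximal distance'' into ``$\cU$ is a subfield'' requires the partial-spread counting bound $N(q^k-1)\le q^n-1$ to pin down $r\ge k$. Everything else is dimension bookkeeping and direct citation of Corollary~\ref{C-Fqk}. An alternative that avoids the counting argument entirely is to observe that $s=0$ says $\orb(\cU)$ has $N$ pairwise-disjoint $k$-spaces, which already forces $N\le\frac{q^n-1}{q^k-1}$; combined with $N=\frac{q^n-1}{q^r-1}$ this is the same inequality, so I do not see a way to dodge it — but it is completely routine.

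\QED
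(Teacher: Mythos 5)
Your proposal is correct and follows essentially the same route as the paper: the equivalence $r=k\Longleftrightarrow\cU=\F_{q^k}$ via the best friend being contained in $\cU$, the direction $\cU=\F_{q^k}\Rightarrow\ds=2k$ via Corollary~\ref{C-Fqk}, and the direction $\ds(\orb(\cU))=2k\Rightarrow r=k$ via Lemma~\ref{L-distance} (trivial pairwise intersections, hence a partial spread) combined with the counting bound $\frac{q^n-1}{q^r-1}(q^k-1)\leq q^n-1$ and $r\leq k$. You also correctly identified that $s=0$ alone does not yield $r=k$ and that the partial-spread counting step is the one substantive ingredient, which is exactly how the paper argues.
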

\begin{proof}
Using the fact that $1\in\cU$ the second equivalence is obvious.
The implication ``$\Longleftarrow$'' of the first equivalence has been dealt with in Corollary~\ref{C-Fqk}.
As for ``$\Longrightarrow$'', note that Lemma~\ref{L-distance} implies that $\cU\alpha^j\cap\cU=\{0\}$ for all~$j$, hence
$\orb(\cU)$ is a partial spread.
Since $|\orb(\cU)|=(q^n-1)/(q^r-1)$, the union of all subspaces in the orbit results in $(q^k-1)(q^n-1)/(q^r-1)$
distinct nonzero points in~$\Fqn$.
Since $r\leq k$, this implies~$r=k$.
\end{proof}

The previous results have shown that the best distance for a $k$-dimensional primitive orbit code is~$2k$, in which
case the code is a spread.
On the other hand, Proposition~\ref{P-Fqk} tells us that these codes have the smallest cardinality among all
$k$-dimensional primitive orbit codes.

The next lemma shows that the worst distance, namely $\ds(\orb(\cU)) = 2r$, is attained whenever the defining
subspace~$\cU$ has a particularly regular form.

\begin{lemma}\label{L-dist-bad}
Suppose $\cU$ is of the form $\cU = \bigoplus_{i = 0}^{t-1} \alpha^{li} \F_{q^r}$ for some $1\leq l <\frac{q^n-1}{q^r-1}$, and
where~$\F_{q^r}$ is the best friend of $\cU$.
Then $\ds(\orb(\cU)) = 2r$.
\end{lemma}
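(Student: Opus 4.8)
The plan is to exhibit, for the given subspace $\cU = \bigoplus_{i=0}^{t-1}\alpha^{li}\F_{q^r}$, one index $j$ with $1\le j<N$ for which $\dim_{\F_{q^r}}(\cU\cap\cU\alpha^j)=t-1$; by Lemma~\ref{L-distance} this forces $s=t-1$ and hence $\ds(\orb(\cU))=2r(t-(t-1))=2r$, which is the desired (worst possible) value. The natural candidate is $j=l$. First I would observe that
\[
   \cU\alpha^l=\bigoplus_{i=0}^{t-1}\alpha^{l(i+1)}\F_{q^r}
   =\bigoplus_{i=1}^{t}\alpha^{li}\F_{q^r}.
\]
Thus $\cU$ and $\cU\alpha^l$ both contain the common summands $\alpha^{li}\F_{q^r}$ for $i=1,\ldots,t-1$, so their intersection contains $\bigoplus_{i=1}^{t-1}\alpha^{li}\F_{q^r}$, an $\F_{q^r}$-space of dimension $t-1$. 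Since $\cU\ne\cU\alpha^l$ (as $l<N=|\orb(\cU)|$, using Corollary~\ref{C-cardinality}), the intersection has $\F_{q^r}$-dimension at most $t-1$, so it equals exactly $t-1$.

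The one point needing care is that $j=l$ really lies in the admissible range $1\le j<N$, i.e. that $\alpha^l\F_{q^r}\ne\F_{q^r}$ so that the orbit has not already wrapped around; but this is exactly the hypothesis $1\le l<\frac{q^n-1}{q^r-1}=N$ together with the fact (from the proof of Proposition~\ref{P-Fqk}) that the cosets $\F_{q^r}\alpha^i$, $0\le i<N$, are pairwise distinct, so $\cU\alpha^l\ne\cU$ indeed holds. One should also note that $t\ge2$ here: if $t=1$ then $\cU=\F_{q^r}$ and the statement already follows from Corollary~\ref{C-Spread} with $\ds=2k=2r$; so the interesting case is $t\ge2$, where the direct-sum structure genuinely has overlapping summands after the shift.

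I do not expect any real obstacle: the argument is essentially the remark that a shift by the "step size" $l$ of a sum indexed by an arithmetic progression of length $t$ overlaps the original in a progression of length $t-1$. The only thing to be a little careful about is to phrase the intersection bound at the level of $\F_{q^r}$-dimension (not merely $\F$-dimension), which is legitimate because $\cU$, $\cU\alpha^l$, and hence $\cU\cap\cU\alpha^l$ are all $\F_{q^r}$-vector spaces, precisely as used in Lemma~\ref{L-distance}. Combining $s=t-1$ with formula~\eqref{e-dists} then gives $\ds(\orb(\cU))=2r$, completing the proof.
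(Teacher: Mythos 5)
Your argument is correct and is essentially the paper's own proof: both take $j=l$, note that $\cU\alpha^l=\bigoplus_{i=1}^{t}\alpha^{li}\F_{q^r}$ shares the summands $\alpha^{li}\F_{q^r}$, $1\le i\le t-1$, with $\cU$, and combine the resulting lower bound $\dim_{\F_{q^r}}(\cU\cap\cU\alpha^l)\ge t-1$ with the upper bound $t-1$ coming from $\cU\neq\cU\alpha^l$ (valid since $1\le l<N=|\orb(\cU)|$ because $\F_{q^r}$ is the best friend), then invoke Lemma~\ref{L-distance}. Your extra remarks on the range of $l$ and the trivial case $t=1$ are fine but not needed beyond what the paper states.
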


\begin{proof}
Since $\alpha^l\cU = \bigoplus_{i = 1}^{t} \alpha^{li} \F_{q^r}$ we have
$\bigoplus_{i = 1}^{t-1} \alpha^{li} \F_{q^r}\subseteq \cU\cap\alpha^l\cU$.
Moreover, $l<|\orb(\cU)|$ yields
$\dim_{\F_{q^r}}(\cU\cap\alpha^l\cU)\leq t-1 = \dim_{\F_{q^r}}(\bigoplus_{i = 1}^{t-1} \alpha^{li} \F_{q^r})$.
So $\cU\cap\alpha^l\cU=\bigoplus_{i = 1}^{t-1} \alpha^{li} \F_{q^r}$, and
$\dim_{\F_{q^r}}(\cU\cap\alpha^l\cU) = t-1$, which is the maximum possible intersection between any two distinct
subspaces in the orbit code.
Hence in the notation of Lemma~\ref{L-distance} we have $s =t-1$, and $\ds(\orb(\cU))=2r$.
\end{proof}

Observe that in the previous lemma we added the requirement that~$\F_{q^r}$ be the best friend of~$\cU$ because this does
not follow from the form of~$\cU$.
Indeed, $\cU = \bigoplus_{i = 0}^{t-1} \alpha^{li} \F_{q^r}$ only implies that~$\F_{q^r}$ is a friend of~$\cU$, but it
may not be the best friend.
For instance, in~$\F_{2^6}$ with primitive element~$\alpha$ we have $\F_{2^2}=\F_2\oplus\alpha^{21}\F_2$, hence the
best friend is~$\F_{2^2}$.
The following result shows that this is essentially the only type of case where~$\F_{q^r}$ is not the best friend.

\begin{prop}\label{P-DirSumBF}
Let $\cU = \bigoplus_{i=0}^{t-1}\alpha^{il}\F_{q^r}$ for some~$l$, where $t>1$.
Then $\deg(\mipo(\alpha^l,\F_{q^r}))\geq t$.
Furthermore,
\begin{align*}
    \cU=\F_{q^{rt}}&\Longleftrightarrow  \deg(\mipo(\alpha^l,\F_{q^r}))=t\\
      &\Longleftrightarrow \alpha^l\cU=\cU\\
      &\Longleftrightarrow \text{$\F_{q^r}$ is not the best friend of~$\cU$.}
\end{align*}
In other words, $\F_{q^r}$ is the best friend of~$\cU$ if and only if~$\cU$ is not a field.
\end{prop}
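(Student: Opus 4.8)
The plan is to prove the chain of equivalences in Proposition~\ref{P-DirSumBF} by first establishing the degree bound and then cycling through the three conditions, most conveniently in the order: $\deg(\mipo(\alpha^l,\F_{q^r}))=t \Leftrightarrow \alpha^l\cU=\cU \Leftrightarrow \cU=\F_{q^{rt}} \Leftrightarrow \F_{q^r}$ is not the best friend of~$\cU$.

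First I would prove $\deg(\mipo(\alpha^l,\F_{q^r}))\geq t$. Set $\gamma:=\alpha^l$ and $m:=\deg(\mipo(\gamma,\F_{q^r}))$, so $1,\gamma,\ldots,\gamma^{m-1}$ are linearly independent over~$\F_{q^r}$ but $\gamma^m$ is an $\F_{q^r}$-linear combination of lower powers. The sum $\cU=\sum_{i=0}^{t-1}\gamma^i\F_{q^r}$ is by hypothesis direct, so the elements $1,\gamma,\ldots,\gamma^{t-1}$ must be $\F_{q^r}$-linearly independent; hence $m\geq t$. This also gives, for free, the first equivalence I want: $m=t$ iff $\gamma^t\in\spann_{\F_{q^r}}\{1,\gamma,\ldots,\gamma^{t-1}\}=\cU$, and since $\gamma\cU=\spann_{\F_{q^r}}\{\gamma,\gamma^2,\ldots,\gamma^t\}$, this is exactly the statement $\gamma\cU\subseteq\cU$, which by finite dimension is equivalent to $\gamma\cU=\cU$. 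So $m=t\Leftrightarrow\alpha^l\cU=\cU$.

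Next, $\alpha^l\cU=\cU\Rightarrow\cU=\F_{q^{rt}}$: if $\gamma\cU=\cU$ then $\cU$ is closed under multiplication by $\gamma$ and by $\F_{q^r}$, hence it is a ring containing $\F_{q^r}[\gamma]$; since $1\in\cU$ and $\cU$ is finite, $\cU=\F_{q^r}[\gamma]=\F_{q^r}[\alpha^l]$ is a subfield of~$\Fqn$, and it has $\F_{q^r}$-dimension $t$, so $\cU=\F_{q^{rt}}$. The converse $\cU=\F_{q^{rt}}\Rightarrow\alpha^l\cU=\cU$ is immediate because $\alpha^l=\gamma\in\cU^*$, and a field is closed under multiplication by any of its nonzero elements. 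Finally, for the last equivalence: if $\cU=\F_{q^{rt}}$ then $\F_{q^{rt}}$ is a friend of~$\cU$ strictly larger than~$\F_{q^r}$ (as $t>1$), so $\F_{q^r}$ is not the best friend. Conversely, if $\F_{q^r}$ is not the best friend, then by Proposition~\ref{T-StabBF} (applicable since $1\in\cU$) the best friend $\F_{q^{r'}}$ strictly contains $\F_{q^r}$, so $r\mid r'$, $r'>r$, and $\dim_{\F_{q^{r'}}}\cU = rt/r' < t$; but $\dim_{\F_{q^{r'}}}\cU\geq 1$, and more to the point $\cU$ being an $\F_{q^{r'}}$-space with $1\in\cU$ forces $\F_{q^{r'}}\subseteq\cU$ (Remark~\ref{R-BF}), and then $\gamma=\alpha^l\in\cU$; I then want to conclude $\gamma\cU=\cU$, which follows once I observe $\cU\supseteq\F_{q^{r'}}[\gamma]\supseteq\F_{q^r}[\gamma]=\spann_{\F_{q^r}}\{1,\gamma,\dots,\gamma^{t-1},\dots\}$ has $\F_{q^r}$-dimension $\geq t$ but $\cU$ has dimension exactly $t$, so $\gamma^t\in\cU$ and hence $\gamma\cU=\cU$. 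The closing sentence ``$\F_{q^r}$ is the best friend iff $\cU$ is not a field'' is then just the contrapositive of the equivalence $\cU=\F_{q^{rt}}\Leftrightarrow$ not-best-friend, using that $\dim_{\F_{q^r}}\cU=t$ means the only way $\cU$ is a field is $\cU=\F_{q^{rt}}$.

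The main obstacle I anticipate is the direction ``$\F_{q^r}$ not the best friend $\Rightarrow \alpha^l\cU=\cU$'': one must be careful to use that the best friend \emph{contains} $\F_{q^r}$ (not merely some unrelated larger friend) — this is where Proposition~\ref{T-StabBF} is essential — and then to deduce $\alpha^l=\gamma\in\cU$ so that the dimension count closes the loop. Everything else is routine linear algebra over $\F_{q^r}$ together with the elementary fact that a finite unital subring of a field is a subfield.
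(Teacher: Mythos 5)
The easy implications in your proposal are fine (the bound $\deg(\mipo(\alpha^l,\F_{q^r}))\geq t$ from directness, the equivalence $\deg=t\Leftrightarrow\alpha^l\cU=\cU$, the step $\alpha^l\cU=\cU\Rightarrow\cU=\F_{q^r}[\alpha^l]=\F_{q^{rt}}$, and ``$\cU$ a field $\Rightarrow$ $\F_{q^r}$ not the best friend''), and they agree in substance with the paper. But there is a genuine gap in exactly the direction you flagged as the main obstacle, namely ``$\F_{q^r}$ not the best friend $\Rightarrow\alpha^l\cU=\cU$''. The pivotal containment $\cU\supseteq\F_{q^{r'}}[\gamma]$ (with $\gamma=\alpha^l$ and $\F_{q^{r'}}$ the best friend) is asserted, not proved, and it is circular: from Proposition~\ref{T-StabBF} and Remark~\ref{R-BF} you do get $\F_{q^{r'}}\subseteq\cU$, and $\gamma\in\cU$ holds anyway since the $i=1$ summand of $\cU$ is $\alpha^l\F_{q^r}$, but these facts only yield the $\F_{q^{r'}}$-subspace $\F_{q^{r'}}+\F_{q^{r'}}\gamma+\cdots+\F_{q^{r'}}\gamma^{t-1}\subseteq\cU$. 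The subring $\F_{q^{r'}}[\gamma]$ contains \emph{all} powers $\gamma^i$, and since $\cU=\spann_{\F_{q^r}}\{1,\gamma,\ldots,\gamma^{t-1}\}$, having all powers of $\gamma$ inside $\cU$ is equivalent to $\gamma\cU\subseteq\cU$ --- precisely the statement you are trying to establish. A vector space over $\F_{q^{r'}}$ containing $\gamma$ need not be closed under multiplication, so the inclusion cannot be quoted for free.

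The gap can be closed by a dimension count. The paper argues as follows: if $\alpha^l\cU\neq\cU$, then $\dim_{\F_{q^{r'}}}(\cU\cap\alpha^l\cU)\leq t'-1$, where $t':=\dim_{\F_{q^{r'}}}\cU$ satisfies $r't'=rt$; on the other hand $\bigoplus_{i=1}^{t-1}\alpha^{il}\F_{q^r}\subseteq\cU\cap\alpha^l\cU$ has $\F_q$-dimension $r(t-1)$, so $r(t-1)\leq r'(t'-1)=rt-r'$, forcing $r'\leq r$, a contradiction; hence $\alpha^l\cU=\cU$ and the earlier equivalences give $\cU=\F_{q^{rt}}$. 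Alternatively, your own approach can be repaired: since $r'>r$, the $t$ elements $1,\gamma,\ldots,\gamma^{t-1}$ cannot be $\F_{q^{r'}}$-independent (otherwise $\cU$ would contain a subspace of $\F_q$-dimension $r't>rt$), so $s:=\deg(\mipo(\gamma,\F_{q^{r'}}))\leq t-1$; then $\F_{q^{r'}}[\gamma]=\spann_{\F_{q^{r'}}}\{1,\gamma,\ldots,\gamma^{s-1}\}$ is spanned by powers that already lie in $\cU$, and only now does $\F_{q^r}[\gamma]\subseteq\F_{q^{r'}}[\gamma]\subseteq\cU$ follow, after which your comparison of $\F_{q^r}$-dimensions finishes the argument. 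Without one of these counts, the key direction remains unproven.
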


\begin{proof}
First of all, the directness of the sum implies immediately that $\deg(\mipo(\alpha^l,\F_{q^r}))\geq t$.
As for the chain of equivalences we argue as follows.
\\
1) Assume $\cU=\F_{q^{rt}}$.
Then~$\cU$ is a field and the form of~$\cU$ shows that $\cU=\F_{q^r}[\alpha^l]$.
This implies $\deg(\mipo(\alpha^l,\F_{q^r}))=t$.
\\
2) $\deg(\mipo(\alpha^l,\F_{q^r}))=t$ yields $\dim_{\F_{q^r}}\F_{q^r}[\alpha^l]=t$, and
since~$\cU$ is contained in this field, we have $\cU=\F_{q^r}[\alpha^l]$.
This  implies $\alpha^l\cU=\cU$.
\\
3) If $\alpha^l\cU=\cU$, then $\alpha^l\in\stab(\cU)$ and hence $\alpha^l$ is contained in the best friend.
Since due to the directness of the sum,~$\alpha^l$ is not in~$\F_{q^r}$, we conclude that~$\F_{q^r}$ is not the best friend of~$\cU$.
\\
4) Assume that the best friend of~$\cU$ is~$\F_{q^{r'}}$ for some $r'>r$.
Set $\dim_{\F_{q^{r'}}}\cU=t'$.
Then $rt=k=r't'$.
We show that $\alpha^l\cU=\cU$.
Assume to the contrary that $\alpha^l\cU\neq\cU$.
Then $\dim_{\F_{q^{r'}}}(\cU\cap\alpha^l\cU)\leq t'-1$.
On the other hand $\bigoplus_{i=1}^{t-1}\alpha^{il}\F_{q^r}\subseteq(\cU\cap\alpha^l\cU)$.
Considering dimensions over~$\F=\F_q$ we obtain the inequality $r(t-1)\leq r'(t'-1)$, and
using $rt=r't'$ this yields $r\geq r'$, a contradiction.
Thus~$\alpha^l\cU=\cU$, and this implies that $\alpha^{lt}=\sum_{i=0}^{t-1}a_i\alpha^{li}$ for some $a_i\in\F_{q^r}$.
But this means that $\deg(\mipo(\alpha^l,\F_{q^r}))=t$ and $\cU=\F_{q^r}[\alpha^l]=\F_{q^{rt}}$.
\end{proof}

Of course, there are also subspaces that are not of the form in Lemma~\ref{L-dist-bad} and yet generate orbit codes with
distance as low as~$2r$.
\begin{exa}\label{E-non-optimal}
Consider~$\F_{2^{12}}$ with primitive element~$\alpha$ as in Example~\ref{E-EtVa2}.
Let $\cW=\F_{2^4}+\alpha\F_{2^2}$. In Example~\ref{E-EtVa2}(b) we saw that the best friend is~$\F_{2^2}$.
One can check that $\ds(\orb(\cW))=4=2r$.
\end{exa}

Let us now return to the case where the distance is large.
According to Lemma~\ref{L-distance} the best distance a non-spread orbit code may achieve is $2(k-r)$.
\begin{exa}\label{E-optimal}
\begin{alphalist}
\item The code in Example~\ref{E-EtVa1} is optimal among all non-spread orbit codes:
      in~\cite[p.~1170]{EtVa11} the distance has been found as~$4$, and this is $2(k-1)$.
\item Consider the code in Example~\ref{E-EtVa2}(a). In this case $k=6$ and $r=2$.
      One can verify that $\dim_{\F_{2^2}}(\cU\cap\cU\alpha^j) \leq 1$ for all
      $1\leq j< 1365=|\orb(\cU)|$.
      Hence Lemma~\ref{L-distance} yields $\ds(\orb(\cU))=2(k-r)=8$, which means the code
      is optimal among all non-spread orbit codes.
\end{alphalist}
\end{exa}

\begin{exa}\label{E-t=2}
Let $\dim_{\F_{q^r}}(\cU)=t=2$, hence $k=2r$. Then $2r=2(k-r)$, and thus
$\ds(\orb(\cU))=2(k-r)$ due to Lemma~\ref{L-distance}.
Thus any such code is optimal among all non-spread orbit codes with best friend~$\F_{q^r}$.
\end{exa}

Next we give a condition that leads to a distance less than $2(k-r)$.
Consider an intersection $\cV:=\cU\cap\cU\alpha^j$ for some~$j$.
Then~$\cV$ is an $\F_{q^r}$-subspace of~$\cU$, and thus~$\F_{q^r}$ is a friend of~$\cV$.
It is a consequence of Proposition~\ref{T-StabBF} that the best friend of $\cV$ is $\F_{q^{r'}}$
for some $r'$ such that $r\mid r'$.

\begin{prop}\label{P-dist-subspace}
Suppose there exists a subspace~$\cV$ of~$\cU$  with best friend~$\F_{q^{r'}}$ for some $r'>r$.
Then $\ds(\orb(\cU))\leq2(k-r')<2(k-r)$.
\end{prop}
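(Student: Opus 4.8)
The plan is to exploit the fact that the best friend $\F_{q^{r'}}$ of $\cV$ acts by multiplication on $\cV$, and to transport this action around the orbit using a primitive element. First I would observe that since $\F_{q^{r'}}$ is a friend of $\cV$, we have $\cV\gamma=\cV$ for every $\gamma\in\F_{q^{r'}}^*$; in particular, picking a generator $\delta$ of $\F_{q^{r'}}^*$, we get $\cV\delta=\cV$. The key point is that $\delta$ is a power of the primitive element $\alpha$, say $\delta=\alpha^m$ with $1\leq m<q^n-1$, and $m$ is not a multiple of $N=(q^n-1)/(q^r-1)$ because otherwise $\delta$ would lie in $\F_{q^r}$, contradicting $r'>r$ and the fact that $\F_{q^{r'}}^*$ is cyclic of order $q^{r'}-1>q^r-1$. (One should be slightly careful here: what we really need is that $\alpha^m$ induces a nontrivial element of the orbit, i.e.\ $\cU\alpha^m\neq\cU$; this follows since $\alpha^m=\delta\notin\F_{q^r}=\stabp(\cU)$, using Lemma~\ref{L-Stabplus} and the hypothesis $r'>r$.)

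Next I would locate a large common $\F_{q^{r'}}$-subspace of $\cU$ and $\cU\alpha^m$. Since $\cV\subseteq\cU$ and $\cV\delta=\cV$, we have $\cV=\cV\alpha^m\subseteq\cU\alpha^m$, so $\cV\subseteq\cU\cap\cU\alpha^m$. Now $\cV$ is a vector space over $\F_{q^{r'}}$, and being nonzero (it contains $1$, or at least is a genuine subspace with a best friend properly larger than $\F_{q^r}$, so $\dim_{\F_q}\cV\geq r'$), we get
\[
   \dim_{\F_q}(\cU\cap\cU\alpha^m)\geq\dim_{\F_q}\cV\geq r'.
\]
Therefore, using the dimension formula~\eqref{e-dist} together with the remark that for orbit codes $\ds(\orb(\cU))=\min_j\ds(\cU,\cU\alpha^j)$,
\[
   \ds(\orb(\cU))\leq\ds(\cU,\cU\alpha^m)=2\big(k-\dim_{\F_q}(\cU\cap\cU\alpha^m)\big)\leq 2(k-r').
\]
Finally, since $r'>r$ gives $2(k-r')<2(k-r)$, the claimed strict inequality follows.

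The main obstacle — really the only subtle point — is verifying that the multiplier $\delta$ we use is not in the stabilizer of $\cU$, i.e.\ that $\alpha^m$ gives a genuinely distinct subspace $\cU\alpha^m\neq\cU$ in the orbit, so that it is legitimately counted in the minimum over $1\leq j<N$. If $\delta=\alpha^m$ were in $\stab(\cU)$, then by Lemma~\ref{L-Stabplus} and Proposition~\ref{T-StabBF} it would lie in the best friend $\stabp(\cU)=\F_{q^r}$ of $\cU$; but $\delta$ generates $\F_{q^{r'}}^*$ with $r'>r$, so $\delta\notin\F_{q^r}$, a contradiction. Hence $\cU\alpha^m$ is a distinct subspace in $\orb(\cU)$, and the bound above applies. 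Everything else is a direct application of the already-established structural facts about friends, stabilizers, and the subspace distance formula.
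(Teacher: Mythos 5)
Your proof is correct and takes essentially the same route as the paper's: the heart of both arguments is to find a power $\alpha^j$ that stabilizes $\cV$ but not $\cU$, so that $\cV\subseteq\cU\cap\cU\alpha^j$ forces $\dim(\cU\cap\cU\alpha^j)\geq r'$ and hence $\ds(\orb(\cU))\leq 2(k-r')$. The only difference is cosmetic: you exhibit this element explicitly as a generator of $\F_{q^{r'}}^*$ and rule it out of $\stab(\cU)=\F_{q^r}^*$ via Lemma~\ref{L-Stabplus} and Proposition~\ref{T-StabBF}, whereas the paper infers the existence of such a $j$ by comparing the cardinalities $|\orb(\cV)|<|\orb(\cU)|$ through Corollary~\ref{C-cardinality}.
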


\begin{proof}
Since $\F_{q^{r'}}$ is the best friend of $\cV$, Corollary~\ref{C-cardinality} yields
\[
   |\orb(\cV)|=\dsfrac{q^n-1}{q^{r'}-1}<\dsfrac{q^n-1}{q^r-1} = |\orb(\cU)|.
\]
So there exists some~$j$ such that $\cV\alpha^j = \cV$, while $\cU\alpha^j\neq\cU$.
Then $\cV\subset\cU\cap\cU\alpha^j$, so $\dim_{\F_{q^r}}(\cU\cap\cU\alpha^j)\geq \dim_{\F_{q^r}}(\cV) \geq\frac{r'}{r}$.
Hence $s:=\max_{1\leq j<N}\dim_{\F_{q^r}}(\cU\cap\cU\alpha^j)\geq\frac{r'}{r}$, and
Lemma~\ref{L-distance} implies 
$\ds(\orb(\cU))=2(k-sr)\leq2(k-r')<2(k-r)$.
\end{proof}

Example~\ref{E-non-optimal} illustrates this case.
The subspace~$\cW$ has best friend~$\F_{2^2}$, while also containing the field~$\F_{2^4}$.
Since that subspace is its own best friend, Lemma~\ref{L-distance} and Proposition~\ref{P-dist-subspace} guarantee
$2r=4\leq\ds(\orb(\cW))\leq 2(6-4)=4$, as we already saw.

We would like to stress that the condition in Proposition~\ref{P-dist-subspace} is not necessary for
the distance to be less than $2(k-r)$.
Lemma~\ref{L-dist-bad} provides examples of such codes.
For instance, the subspace $\cU$ of~$\F_{2^7}$ generated by $1,\,\alpha,\,\alpha^2$ (where~$\alpha$ is a primitive
element of~$\F_{2^7}$) has distance $\ds(\orb(\cU))=2r=2<2(k-r)$.
But since~$\F_2$ is the only subfield of~$\F_{2^7}$, every subspace of~$\cU$ has best friend~$\F_2$, and the assumption of
Proposition~\ref{P-dist-subspace} is not satisfied.

Unfortunately, we do not know any general construction of cyclic orbit codes with cardinality $(q^n-1)/(q^r-1)$ and distance $2(k-r)$, i.e.,
the best non-spread code case.
In~\cite[p.~7396]{TMBR13} it is conjectured that for any $n,k,q$ there exists a cyclic orbit code of cardinality
$(q^n-1)/(q-1)$ and distance $2(k-1)$.
In the same paper the conjecture is also verified for randomly chosen sets of
$(n,k,q)\in\{4,\ldots,100\}\times\{1,\ldots,10\}\times\{2,3\}$.

However, by exhausting all possible $4$-dimensional subspaces in~$\F_2^8$ via their row echelon form we could
verify that no cyclic orbit code exists with parameters $(n,k,r,q)=(8,4,1,2)$, hence with cardinality~$255$ and distance~$6$.
While there exists such a code for $(n,k,r,q)=(6,3,1,2)$ and distance~$4$, it remains open whether there is a cyclic
orbit code with parameters $(2k,k,1,q)$ and distance $2(k-1)$ for any~$k>4$.
The usual bounds, see e.g.~\cite{XiFu09}, do not rule out the existence of such codes.

\begin{exa}\label{E-bounds}
Let us consider cyclic orbit codes in $\F_{2^{12}}$ of dimension~$k=6$ and with best friend~$\F_2$.
Due to Corollary~\ref{C-cardinality}, such a code has cardinality~$2^{12}-1=4095$.
Because of the above discussion, we have doubts that there exists such a code with distance $2(k-1)=10$, but
we did not perform an exhaustive search.
The best code we could find with a random search has distance~$8$ and is generated by
$\cU=\F_2+\alpha\F_2+\alpha^4\F_2+\alpha^{10}\F_2+\alpha^{10}\beta\F_2+\alpha^8\beta^2\F_2$, where~$\alpha$ and~$\beta$ are primitive elements
of~$\F_{2^{12}}$ and $\F_{2^6}$, respectively.
\end{exa}

We close this section with the following positive observation.
Note that the codes found below will be used again in Example~\ref{E-PatchSecondBest} to build larger codes of the same quality.
\begin{exa}\label{E-k3SB}
It can be verified that for $q=2$ and all $n\in\{6,\ldots,20\}$, the cyclic orbit code $\orb(\cU)$ of
dimension~$k=3$ and cardinality $2^n-1$ with
\[
   \cU=\F_2+\alpha^2\F_2+\alpha^3\F_2\subseteq\F_{2^n}, \text{ where }\ideal{\alpha}=\F_{2^n}^*,
\]
has distance $4=2(k-1)$.
The same is true (maximal cardinality and distance~$4$) for $q=3,5,7$ and $n\in\{6,7,8\}$ and the analogous subspace~$\cU$.
We did not explore larger values of~$q$ and~$n$.
\end{exa}

\subsection{Computing the Subspace Distance via Multisets}
We now turn to a more explicit computation of the subspace distance of a cyclic orbit code.
The next result improves upon \cite[Thm.~15, Prop.~16]{RoTr13}, which in turn goes back to \cite[Lem.~1]{KoKu08}.
By taking the best friend into account, we will be able to work with a smaller multiset than in~\cite{RoTr13}, and we
do not have to distinguish between orbits of size $q^n-1$ (which can occur only if $q=2$) and those of smaller size.

As before let~$\cU$ have best friend~$\F_{q^r}$.
Lemma~\ref{L-Stabplus} yields
\begin{equation}\label{e-Ualpha}
    \stab(\cU)=\ideal{\alpha^N}=\F_{q^r}^*, \text{ where }N=\frac{q^n-1}{q^r-1}.
\end{equation}
Consider the group action $\Fqn \times \ideal{\alpha^N} \longrightarrow \Fqn$ given by $(v, \gamma)\mapsto v\gamma$.
For each $v\in\Fqns$ the orbit of~$v$ is
\[
   \cO(v):=\{v, v\alpha^N, v\alpha^{2N},\ldots, v\alpha^{(q^r-2)N}\},
\]
and $|\cO(v)|= |\ideal{\alpha^N}|=q^r-1$, since all elements of the orbit must be distinct.
Writing $v=\alpha^b$, we have
\[
    \cO(v) =\{ \alpha^{b}, \alpha^{b+N}, \ldots, \alpha^{b+N(q^r-2)}\}.
\]
By modular arithmetic there is exactly one element of this orbit whose exponent is strictly less than $N$.
Hence
\[
    \Fqns=\bigcup_{b=0}^{N-1}\hspace*{-1.1em}\raisebox{.5ex}{$\cdot$}\hspace*{1.1em}\cO(\alpha^b).
\]
Since $\cU$ is an~$\F_{q^r}$-vector space, the orbit $\cO(u)$ is in~$\cU$ for every $u\in\cU$.
This shows that
\begin{equation}\label{e-Uorbit}
     \cU\backslash\{0\} =\bigcup_{i=1}^S\hspace*{-.9em}\raisebox{.5ex}{$\cdot$}\hspace*{.9em}\mathcal{O}(\alpha^{b_i})\text{ for $S=\frac{q^k-1}{q^r-1}$
     and suitable non-negative integers }b_1,\ldots,b_S<N.
\end{equation}
One should note that $b_1,\ldots,b_S$ are uniquely determined by~$\cU$.
Moreover, if $\alpha^c\in \cU$ and $0\leq c<N$, then $c\in\{b_1,\ldots, b_s\}$.

For the following result, recall that a \emph{multiset} is collection of elements where each element is allowed to appear more than once.
We will denote multisets by double braces $\{\!\{ \ldots\}\!\}$.
The number of times an element appears in the multiset is called its \emph{multiplicity}.

\begin{theo}\label{T-dist}
Let~$\cU$ be as above and $b_1,\ldots,b_S$ be as in~\eqref{e-Uorbit}.
Define the multiset
\[
    \cD:=\{\!\{b_l-b_m~\mod N \mid 1\leq l,\,m\leq S,\,l\neq m\}\!\},
\]
and for $J\in\cD$ denote by $m(J)$ the multiplicity of~$J$ in~$\cD$.
Furthermore, set $M:= \max_{1\leq J<N}\{m(J)\}$.
If~$\cD=\emptyset$, we define $M:=0$.
Then $\dim(\cU\cap\cU\alpha^J) = \log_{q}(m(J)(q^r-1) +1)$ and
\[
   \ds(\orb(\cU)) = 2(k-L),\text{ where }L = \log_q(M(q^r-1)+1).
\]

\end{theo}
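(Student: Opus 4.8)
The statement relates the $\F_{q^r}$-dimension of the intersection $\cU\cap\cU\alpha^J$ to the multiplicity $m(J)$, and then extracts the minimum distance. The key observation is that $\cU\cap\cU\alpha^J$ is itself an $\F_{q^r}$-vector space (being the intersection of two such), so its set of nonzero elements decomposes into $\cO(\cdot)$-orbits of size $q^r-1$ exactly as in~\eqref{e-Uorbit}. Thus $\dim_{\F_{q^r}}(\cU\cap\cU\alpha^J)=d$ iff $|\cU\cap\cU\alpha^J|=q^{rd}$, i.e. iff the number of full $\cO$-orbits it contains is $(q^{rd}-1)/(q^r-1)$. So the whole proof reduces to counting these orbits in terms of the $b_i$'s.

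First I would establish the orbit-counting identity: for $1\le J<N$, the number of $\cO$-orbits contained in $\cU\cap\cU\alpha^J$ equals $m(J)$. To see this, note $\alpha^c\in\cU\cap\cU\alpha^J$ (with $\alpha^c\ne 0$) iff $\alpha^c\in\cU$ and $\alpha^{c-J}\in\cU$. Since $\cU$ is a union of the orbits $\cO(\alpha^{b_i})$, and each orbit meets the exponent range $[0,N)$ in exactly one point, we get: $\cO(\alpha^c)\subseteq\cU\cap\cU\alpha^J$ iff both $c\bmod N$ and $(c-J)\bmod N$ lie in $\{b_1,\dots,b_S\}$. Writing $c\equiv b_l$ and $c-J\equiv b_m\pmod N$, this says $J\equiv b_l-b_m\pmod N$ for some pair $l\ne m$ (the case $l=m$ would force $J\equiv 0$, excluded since $1\le J<N$). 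The number of such orbits is precisely the number of ordered pairs $(l,m)$, $l\ne m$, with $b_l-b_m\equiv J$, which is $m(J)$ by definition of $\cD$. Hence $|\cU\cap\cU\alpha^J|=m(J)(q^r-1)+1$, giving $\dim(\cU\cap\cU\alpha^J)=\log_q(m(J)(q^r-1)+1)$.

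Next I would derive the distance formula. By the remark preceding Section~4.1, $\ds(\orb(\cU))=\min_{1\le J<N}\ds(\cU,\cU\alpha^J)$, and by~\eqref{e-dist} with $\dim\cU=\dim\cU\alpha^J=k$ this equals $\min_{1\le J<N} 2\big(k-\dim(\cU\cap\cU\alpha^J)\big)=2\big(k-\max_{1\le J<N}\dim(\cU\cap\cU\alpha^J)\big)$. Since $\log_q$ is increasing, $\max_{1\le J<N}\dim(\cU\cap\cU\alpha^J)=\log_q\big(\max_{1\le J<N}(m(J)(q^r-1)+1)\big)=\log_q(M(q^r-1)+1)=L$, using the convention $M=0$ (giving $L=0$) when $\cD=\emptyset$, which corresponds to all pairwise intersections being trivial. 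This yields $\ds(\orb(\cU))=2(k-L)$.

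\textbf{Main obstacle.} The only subtle point is the bookkeeping in the orbit-counting step: one must be careful that the correspondence between orbits $\cO(\alpha^c)\subseteq\cU\cap\cU\alpha^J$ and ordered pairs $(l,m)$ is a genuine bijection — distinct orbits in the intersection give distinct pairs and vice versa — which hinges on the uniqueness of the representative exponent in $[0,N)$ for each $\cO$-orbit (already noted after~\eqref{e-Uorbit}) and on the fact that $b_1,\dots,b_S$ are pairwise distinct. One should also confirm that $m(J)(q^r-1)+1$ is always a power of $q$ a posteriori (it must be, since it equals $|\cU\cap\cU\alpha^J|$, an $\F_{q^r}$-subspace cardinality), so that the logarithm is an integer; this is automatic from the argument rather than something to be checked separately. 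Everything else is routine.
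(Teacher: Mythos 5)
Your proposal is correct and follows essentially the same route as the paper: decompose $\cU$, $\cU\alpha^J$, and their intersection into orbits of the stabilizer $\ideal{\alpha^N}=\F_{q^r}^*$, show that the orbits contained in $\cU\cap\cU\alpha^J$ correspond exactly to the pairs $(b_l,b_m)$ with $b_l-b_m\equiv J \pmod N$, and then convert the resulting cardinality $m(J)(q^r-1)+1$ into the dimension and distance formulas. The only cosmetic difference is that you handle the $\cD=\emptyset$ case uniformly through $m(J)=0$, whereas the paper treats it separately via Corollary~\ref{C-Spread}; the subtle bijection point you flag (uniqueness of the representative exponent in $[0,N)$ and distinctness of the $b_i$) is precisely what the paper's two-step ``at most/at least $m(J)$ orbits'' argument establishes.
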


\begin{proof}
Let us first consider the case where~$\cD=\emptyset$.
This happens only if~$S=1$, hence $r=k$ and $\cU=\F_{q^k}$.
In this case $\ds(\orb(\cU))=2k$ as we know from Corollary~\ref{C-Spread}.

Let now $\cD\neq\emptyset$.
Fix $J\in\{1,\ldots,N-1\}$.
For all $l\in[S]:=\{1,\ldots,S\}$ we have $\alpha^{b_l+J}\in\cU\alpha^J$, and thus $\cO(\alpha^{b_l+J})\subset\cU\alpha^J$.
Hence $(\cU\alpha^J)\backslash\{0\} = \bigcup_{l\in[S]}\hspace*{-2.5em}\raisebox{.5ex}{$\cdot$}\hspace*{2.5em}\cO(\alpha^{b_l+J})$.
Since $\cU\cap\cU\alpha^J$ is an $\F_{q^r}$-vector space contained in~$\cU$, we have
\[
   (\cU\cap\cU\alpha^J)\backslash\{0\}=\bigcup_{l\in\cL_J}\hspace*{-1.2em}\raisebox{.5ex}{$\cdot$}\hspace*{1.2em}\cO(\alpha^{b_l}),
\]
where
\begin{align*}
  \cL_J&=\{l\in[S]\mid \cO(\alpha^{b_l})=\cO(\alpha^{b_m+J})\text{ for some }m\in[S]\}\\
       &=\{l\in[S]\mid \alpha^{b_l}=\alpha^{b_m+J}\alpha^{\lambda N}\text{ for some }m\in[S]\text{ and }\lambda \in\Z\}.
\end{align*}
Note that $\alpha^{b_l}=\alpha^{b_m+J}\alpha^{\lambda N}$ is equivalent to $b_l\equiv b_m+J+\lambda N~\mod (q^n-1)$.
Since~$N$ is a divisor of $q^n-1$, we conclude
\[
   \cL_J\subseteq\{l\in[S]\mid b_l-b_m\equiv J~\mod N\text{ for some }m\in[S]\}.
\]
By assumption there are $m(J)$ pairs $(b_l,b_m)$ so that $b_l-b_m\equiv J~\mod N$.
Thus, we obtain that $(\cU\cap\cU\alpha^J)\backslash\{0\}$ is the union of at most~$m(J)$ orbits.
This shows that $|\cU\cap\cU\alpha^J|\leq m(J)(q^r-1)+1$.

To show equality, note that there are $m(J)$ pairs $(b_l,b_m)$ such that $b_l-b_m\equiv J~\mod N$.
Pick such a pair $(b_l,b_m)$ and write $b_l=b_m+J+\lambda N$ for some~$\lambda\in\Z$.
Then $\cO(\alpha^{b_l})=\cO(\alpha^{b_m+J+\lambda N})=\cO(\alpha^{b_m+J})$, and so this orbit is in $\cU\cap\cU\alpha^J$.
This shows that there are $m(J)$ orbits in the intersection, and we conclude that
$|\cU\cap\cU\alpha^J|= m(J)(q^r-1)+1$.
Thus $\dim(\cU\cap\cU\alpha^J) = \log_{q}(m(J)(q^r-1) +1)$.

Finally, $\ds(\orb(\cU)) = 2(k-\max_{0<J<N}\{\dim(\cU\cap\cU\alpha^J)\})$, which leads to the desired result.
\end{proof}

\section{A Linkage Construction}\label{S-Link}
In this section we present a construction of how to use cyclic orbit codes to construct subspace codes with longer length.

In order to do so, it will be convenient to present subspaces as rowspaces of suitable matrices.
In other words, we will now consider subspaces in~$\F_q^n$ rather than $\F_{q^n}$.
Hence the orbit code $\orb_{\alpha}(\cU)$ becomes $\{\im U M^i\mid i=0,\ldots,|\orb_{\alpha}(\cU)|-1\}$
as in~\eqref{e-UMf}, where $M$ is the companion matrix of $\mipo(\alpha,\F_q)$, and $U\in\F^{k\times n}$ is
a matrix such that its rowspace $\im U$ is $\varphi(\cU)$ with the isomorphism~$\varphi$ in~\eqref{e-FFn}.

A constant dimension code in~$\F_q^n$ of dimension~$k$, cardinality~$M$, and subspace distance~$d$
will be called a $(n,M,d,k)_q$ code.
Recall also the notation $[N]:=\{1,\ldots,N\}$.

The following construction resembles the one given in the proof of~\cite[Thm.~11]{EtVa11}.
The latter, however, is tailored specifically to the use of a spread code as~$\cC_1$ and a specific choice of~$\cC_2$.
This allows a larger code~$\tilde{\cC}_3$ than our construction below.
In Theorem~\ref{T-PatchTwoCyclic} below we will, however, generalize \cite[Thm.~11]{EtVa11} by making appropriate choices.

\begin{theo}\label{T-PatchTwo}
For $i=1,2$ let~$\cC_i=\{\im U_{i,l}\mid l\in[N_i]\}$ be $(n_i,\,N_i,\,d_i,\,k)_q$ codes.
Thus~$U_{i,l}$ are matrices of rank~$k$ in $\F^{k\times n_i}$ for all $i,l$.
Define the subspace code $\cC_1\circledast\cC_2$ of length $n:=n_1+n_2$ as
$\cC_1\circledast\cC_2:=\tilde{\cC}_1\cup\tilde{\cC}_2\cup\tilde{\cC}_3$,
where
\begin{align*}
     &\tilde{\cC}_1=\{\im(U_{1,l},\,0_{k\times n_2})\mid l\in[N_1]\}, \\[.5ex]
     &\tilde{\cC}_2=\{\im(0_{k\times n_1},\, U_{2,l})\mid l\in[N_2]\}, \\[.5ex]
     &\tilde{\cC}_3=\{\im(U_{1,l},\,U_{2,m})\mid (l,m)\in[N_1]\times[N_2]\}.
\end{align*}
Then~$\cC_1\circledast\cC_2$ is a $(n,\,N,\,d,\,k)_q$ code, where $N=N_1+N_2+N_1N_2$ and
$d=\min\{d_1,\,d_2\}$.
\end{theo}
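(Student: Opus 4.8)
The plan is to verify the three claimed parameters---dimension~$k$, cardinality $N=N_1+N_2+N_1N_2$, and distance $d=\min\{d_1,d_2\}$---in that order, since the first two are essentially bookkeeping and the distance is where the real work lies. For the dimension, I would note that each generating matrix $(U_{1,l},0)$, $(0,U_{2,m})$, and $(U_{1,l},U_{2,m})$ has rank exactly~$k$: the first two because $U_{i,l}$ has rank~$k$, and the third because its first block already has rank~$k$ (alternatively, the row space surjects onto $\im U_{1,l}$ under the projection onto the first $n_1$ coordinates). So every element of $\cC_1\circledast\cC_2$ is a $k$-dimensional subspace of $\F_q^n$.

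For the cardinality, I would argue that the three families $\tilde\cC_1,\tilde\cC_2,\tilde\cC_3$ are pairwise disjoint and that within each family distinct index choices give distinct subspaces. Disjointness of $\tilde\cC_1$ from $\tilde\cC_2$ is clear (a subspace in $\tilde\cC_1$ projects to~$0$ in the last $n_2$ coordinates, one in $\tilde\cC_2$ projects to~$0$ in the first $n_1$; equality would force $k=0$). Disjointness of $\tilde\cC_3$ from the other two follows because a subspace $\im(U_{1,l},U_{2,m})$ projects onto a $k$-dimensional space in \emph{both} coordinate blocks, whereas members of $\tilde\cC_1,\tilde\cC_2$ project to~$0$ in one block. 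Within $\tilde\cC_1$ (resp.\ $\tilde\cC_2$), distinctness follows from $|\cC_1|=N_1$ (resp.\ $|\cC_2|=N_2$) via the projection map. Within $\tilde\cC_3$, if $\im(U_{1,l},U_{2,m})=\im(U_{1,l'},U_{2,m'})$ then projecting to each block gives $\im U_{1,l}=\im U_{1,l'}$ and $\im U_{2,m}=\im U_{2,m'}$, so $(l,m)=(l',m')$. Hence $N=N_1+N_2+N_1N_2$.

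For the distance I would fix two distinct subspaces $\cV,\cW\in\cC_1\circledast\cC_2$ and bound $\ds(\cV,\cW)$ from below by $\min\{d_1,d_2\}$, handling cases according to which of $\tilde\cC_1,\tilde\cC_2,\tilde\cC_3$ each belongs to. The cleanest tool is the identity $\ds(\cV,\cW)=2\dim(\cV+\cW)-2k$ for $k$-dimensional subspaces, so it suffices to show $\dim(\cV+\cW)\geq k+\tfrac12\min\{d_1,d_2\}$, i.e.\ to bound the intersection $\dim(\cV\cap\cW)\leq k-\tfrac12\min\{d_i\}$. I would use the two coordinate projections $\pi_1:\F_q^n\to\F_q^{n_1}$ and $\pi_2:\F_q^n\to\F_q^{n_2}$. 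For $\cV,\cW$ both in $\tilde\cC_1$: then $\cV\cap\cW\subseteq\im U_{1,l}\times\{0\}$ mapped by $\pi_1$ isomorphically into $\im U_{1,l}\cap\im U_{1,l'}$, so $\dim(\cV\cap\cW)\leq k-d_1/2$; symmetrically for $\tilde\cC_2$ with $d_2$. For $\cV\in\tilde\cC_1,\cW\in\tilde\cC_2$ (or vice versa): $\cV\cap\cW\subseteq(\im U_{1,l}\times\{0\})\cap(\{0\}\times\im U_{2,m})=\{0\}$, so the distance is $2k\geq\min\{d_1,d_2\}$ trivially (using $d_i\le 2k$). The substantive cases involve $\tilde\cC_3$: if $\cV=\im(U_{1,l},U_{2,m})$ and $\cW=\im(U_{1,l'},U_{2,m'})$ with $(l,m)\ne(l',m')$, then any $x\in\cV\cap\cW$ has $\pi_1(x)\in\im U_{1,l}\cap\im U_{1,l'}$ and $\pi_2(x)\in\im U_{2,m}\cap\im U_{2,m'}$; if $l\ne l'$ then $\pi_1|_\cV$ is injective and maps $\cV\cap\cW$ into a space of dimension $\le k-d_1/2$, giving the bound with $d_1$, and if $l=l'$ (forcing $m\ne m'$) I use $\pi_2$ and $d_2$. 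Finally, for $\cV\in\tilde\cC_3$ and $\cW\in\tilde\cC_1$: $\pi_2(\cW)=0$, so $\cV\cap\cW\subseteq\ker(\pi_2|_\cV)$; since $\pi_2|_\cV$ need not be injective this requires a little care---but $\pi_1$ restricted to $\cV$ \emph{is} injective, and $\pi_1(\cV\cap\cW)\subseteq\im U_{1,l}\cap\im U_{1,l'}$, again bounded by $k-d_1/2$ (using $l\neq l'$; if $l=l'$ then $\cV\cap\cW$ injects via $\pi_2$ into $\{0\}$, distance $2k$). The symmetric case $\cV\in\tilde\cC_3,\cW\in\tilde\cC_2$ uses $d_2$. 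Collecting all cases yields $\ds(\cV,\cW)\geq\min\{d_1,d_2\}$ always, and equality is attained because $\tilde\cC_1$ alone already has distance $d_1$ and $\tilde\cC_2$ alone has distance $d_2$ (they are isometric copies of $\cC_1,\cC_2$). I expect the mild obstacle to be organizing the case analysis cleanly---especially the mixed cases pairing $\tilde\cC_3$ with $\tilde\cC_1$ or $\tilde\cC_2$, where one must pick the ``right'' projection (the one that is injective on the relevant subspace) to get the bound---but each case reduces to the elementary fact that a coordinate projection restricted to a subspace whose image in the other block is a graph is injective.
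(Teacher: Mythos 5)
Your proposal is correct and takes essentially the same route as the paper: cardinality via pairwise disjointness of $\tilde{\cC}_1,\tilde{\cC}_2,\tilde{\cC}_3$, and the distance via a case analysis resting on the injectivity of the coordinate projections restricted to the ``graph-like'' subspaces $\im(U_{1,l},U_{2,m})$, which bounds $\dim(\cX\cap\cY)$ by the dimension of a component intersection. The only cosmetic difference is that in the mixed cases pairing $\tilde{\cC}_3$ with $\tilde{\cC}_1$ or $\tilde{\cC}_2$ the paper notes the intersection is always trivial (distance $2k$), while for $l\neq l'$ you settle for the weaker but still sufficient bound $d_1$ (resp.\ $d_2$).
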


The reader should note that the code $\cC_1\circledast\cC_2$ depends not only on~$\cC_1$ and~$\cC_2$, but on the actual
choice of the matrices representing the subspaces.
Therefore the notation~$\cC_1\circledast\cC_2$ is not entirely correct, but this should not lead to any confusion.

\begin{proof}
The cardinality of~$\cC_1\circledast\cC_2$ is clear because the three sets~$\tilde{\cC}_i$ are pairwise disjoint.
Furthermore, it is obvious that $\ds(\tilde{\cC}_i)=\ds(\cC_i)$ for $i=1,2$.
Moreover, since each subspace in~$\tilde{\cC}_1$ intersects trivially with each subspace
in~$\tilde{\cC}_2$ or $\tilde{\cC}_3$, we conclude that
$\ds(\cW_1,\cW_2)=2k$ for all $\cW_1\in\tilde{\cC}_1$ and $\cW_2\in\tilde{\cC}_2\cup\tilde{\cC}_3$.
The same is true for the distance between subspaces in~$\tilde{\cC}_2$ and those in $\tilde{\cC}_1\cup\tilde{\cC}_3$.

It remains to consider the subspace distance between any two subspaces in~$\tilde{\cC}_3$.
Let $\cX=\im(U_{1,l},U_{2,m})$ and $\cY=\im(U_{1,l'},U_{2,m'})$ be two distinct subspaces in~$\tilde{\cC}_3$,
thus $l\neq l'$ or $m\neq m'$.
Let $(x_1,x_2)\in\cX\cap\cY$.
Then $x_1\in\im U_{1,l}\cap\im U_{1,l'}$ and $x_2\in\im U_{2,m}\cap\im U_{2,m'}$.
Moreover, $(x_1,x_2)= z(U_{1,l},U_{2,m})$ for some $z\in\F^k$, and this shows that
if $x_1=0$ or $x_2=0$, then $z=0$, and thus $(x_1,x_2)=0$.
This means that the projection of $\cX\cap\cY$ onto either component is injective.

All of this shows that
\[
  \dim(\cX\cap\cY)\leq\min\big\{\dim(\im U_{1,l}\cap\im U_{1,l'}),\,\dim(U_{2,m}\cap\im U_{2,m'})\big\}.
\]
Putting this together with the definition of the subspace distance in~\eqref{e-dist} implies $\ds(\cX,\,\cY)\geq\max\{d_1,\,d_2\}$, as desired.
\end{proof}

Now we may iterate the above construction.
Choosing another subspace code~$\cC_3$ of dimension~$k$, the code $(\cC_1\circledast\cC_2)\circledast\cC_3$ consists of the
rowspaces of matrices of the form
\[
  (U_{1,l},0,0),\ (0, U_{2,l},0),\ (U_{1,l},U_{2,m},0),\ (0,0,U_{3,l}),\ (U_{1,l},0,U_{3,m}),\ (0,U_{2,l},U_{3,m}),\
  (U_{1,l},U_{2,m},U_{3,r}).
\]
This also shows that~$\circledast$ is associative, and we simply write $\cC_1\circledast\cC_2\circledast\cC_3$.
Using $U_{i,0}$ for the $(k\times n_i)$-zero matrix, the result for the $t$-fold $\circledast$-product can be presented
in the following form.

\begin{theo}\label{T-PatchMany}
For $i=1,\ldots,t$ let $\cC_i=\{\im U_{i,l}\mid l\in[N_i]\}$ be $(n_i,\,N_i,\,d_i,\,k)_q$ codes.
Thus the matrices $U_{i,l}\in\F^{k\times n_i}$ have rank~$k$ for all $i=1,\ldots,t$ and $l\in[N_i]$.
Define $U_{i,0}:=0_{k\times n_i}$ for all~$i$.
Then the subspace code of length $n:=\sum_{i=1}^t n_i$
\[
  \cC_1\circledast\ldots\circledast\cC_t:=\big\{\im(U_{1,l_1},\,U_{2,l_2},\ldots,U_{t,l_t})\mid
       l_i\in\{0,\ldots,N_i\}\text{ for all }i,\, (l_1,\ldots,l_t)\neq(0,\ldots,0)\big\}
\]
is an $(n,N,d,k)_q$ code, where $N=\Big(\!\!\prod_{i=1}^t (N_i+1)\!\Big)-1$ and
$d=\min\{d_1,\ldots,d_t\}$.
\end{theo}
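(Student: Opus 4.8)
The plan is to prove Theorem~\ref{T-PatchMany} by induction on~$t$, using Theorem~\ref{T-PatchTwo} as the base case and the induction step. For $t=1$ there is nothing to show. Assuming the statement for $t-1$, set $\cC':=\cC_1\circledast\cdots\circledast\cC_{t-1}$; by the induction hypothesis this is an $(n',N',d',k)_q$ code with $n'=\sum_{i=1}^{t-1}n_i$, $N'=\big(\prod_{i=1}^{t-1}(N_i+1)\big)-1$, and $d'=\min\{d_1,\ldots,d_{t-1}\}$. Applying Theorem~\ref{T-PatchTwo} to $\cC'$ and $\cC_t$ yields a code $\cC'\circledast\cC_t$ of length $n'+n_t=n$, distance $\min\{d',d_t\}=\min\{d_1,\ldots,d_t\}=d$, and cardinality
\[
   N'+N_t+N'N_t=(N'+1)(N_t+1)-1=\Big(\prod_{i=1}^{t}(N_i+1)\Big)-1,
\]
as claimed.

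The remaining point is to check that the associativity already observed informally in the text — and the explicit description of the three-fold product — generalizes so that $\cC'\circledast\cC_t$ is exactly the set displayed in the statement. First I would note that, by the induction hypothesis together with the definition of~$\circledast$ in Theorem~\ref{T-PatchTwo}, a generating matrix of $\cC'$ has the block form $(U_{1,l_1},\ldots,U_{t-1,l_{t-1}})$ with $l_i\in\{0,\ldots,N_i\}$ not all zero, where we continue to use the convention $U_{i,0}=0_{k\times n_i}$. Forming $\cC'\circledast\cC_t=\tilde\cC_1\cup\tilde\cC_2\cup\tilde\cC_3$ according to Theorem~\ref{T-PatchTwo}: $\tilde\cC_1$ contributes the rowspaces $(U_{1,l_1},\ldots,U_{t-1,l_{t-1}},0_{k\times n_t})$ with the $l_i$ not all zero, $\tilde\cC_2$ contributes $(0_{k\times n'},U_{t,l_t})$ with $l_t\in[N_t]$ — which in block form is $(U_{1,0},\ldots,U_{t-1,0},U_{t,l_t})$ — and $\tilde\cC_3$ contributes $(U_{1,l_1},\ldots,U_{t-1,l_{t-1}},U_{t,l_t})$ with the first $t-1$ indices not all zero and $l_t\in[N_t]$. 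Taking the union, one obtains precisely all $(U_{1,l_1},\ldots,U_{t,l_t})$ with $l_i\in\{0,\ldots,N_i\}$ for all $i$ and $(l_1,\ldots,l_t)\neq(0,\ldots,0)$, which is the set in the statement. Moreover all these matrices have rank~$k$: the first $t-1$ blocks already have rank~$k$ whenever $(l_1,\ldots,l_{t-1})\neq 0$ (induction hypothesis), and if $(l_1,\ldots,l_{t-1})=0$ then $l_t\neq 0$ and $U_{t,l_t}$ has rank~$k$.

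I do not expect a genuine obstacle here; the content of the theorem is carried entirely by Theorem~\ref{T-PatchTwo}, and the only thing to be careful about is bookkeeping — making sure the $U_{i,0}=0$ convention correctly absorbs the three pieces $\tilde\cC_1,\tilde\cC_2,\tilde\cC_3$ into one uniform index set, and that no subspace is produced twice so that the cardinality count $N=\prod(N_i+1)-1$ is exact. The latter follows because distinct index tuples $(l_1,\ldots,l_t)$ give distinct rowspaces: the rowspace determines, block by block, whether $l_i=0$ (zero block) or not, and among the nonzero blocks it determines $\im U_{i,l_i}$, hence $l_i$, since within each $\cC_i$ the subspaces $\im U_{i,l}$, $l\in[N_i]$, are pairwise distinct. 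So the induction closes and the proof is complete. \QED
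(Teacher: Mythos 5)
Your proof is correct and follows essentially the same route as the paper, which itself only sketches this result by iterating Theorem~\ref{T-PatchTwo} and noting the associativity of~$\circledast$ with the $U_{i,0}=0$ convention. Your added bookkeeping (that the three pieces $\tilde\cC_1,\tilde\cC_2,\tilde\cC_3$ merge into the uniform index set, and that distinct index tuples give distinct subspaces via block-wise projections, so the cardinality count is exact) fills in precisely the details the paper leaves implicit.
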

One may now ask what type of structure the resulting code has if the constituent codes are cyclic orbit codes.
A partial answer is provided with the following result.
It restricts to the case of primitive cyclic orbit codes over~$\F_2$ that each have~$\F_2$ as the best friend.

\begin{prop}\label{P-COCPatch}
Let $q=2$. For $i=1,\ldots,t$ let~$\alpha_i$ be a primitive element of~$\F_{2^{n_i}}$, and let
$\cC_i=\orb_{\alpha_i}(\cU)$ be a cyclic orbit code of dimension~$k$, length~$n_i$, and cardinality $2^{n_i}-1$.
If $\gcd(n_1,\ldots,n_t)=1$ then $\cC_1\circledast\ldots\circledast\cC_t$
is a union of cyclic orbit codes with respect to a fixed cyclic subgroup of $\GL_n(\F_2)$, where $n=\sum_{i=1}^t n_i$.
\end{prop}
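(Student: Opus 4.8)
The plan is to exhibit an explicit cyclic group acting on $\F_2^n$ whose orbits cover $\cC_1\circledast\cdots\circledast\cC_t$, using the fact that each constituent code is a \emph{full} primitive orbit of size $2^{n_i}-1$. Let $M_i\in\GL_{n_i}(\F_2)$ be the companion matrix of $\mipo(\alpha_i,\F_2)$, so that $\cC_i=\{\im(U_iM_i^{j})\mid j=0,\ldots,2^{n_i}-2\}$ for a fixed generating matrix $U_i\in\F_2^{k\times n_i}$; the order of $M_i$ in $\GL_{n_i}(\F_2)$ is exactly $2^{n_i}-1$ because $\alpha_i$ is primitive. Form the block-diagonal matrix $M:=\mathrm{diag}(M_1,\ldots,M_t)\in\GL_n(\F_2)$ and let $\cG=\ideal{M}$. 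Since the order of $M$ is $\mathrm{lcm}(2^{n_1}-1,\ldots,2^{n_t}-1)$, and the hypothesis $\gcd(n_1,\ldots,n_t)=1$ guarantees that these orders are \emph{not} all divisible by any common prime beyond what is forced, we actually want the stronger arithmetic fact that $\gcd(2^{n_i}-1,2^{n_j}-1)=2^{\gcd(n_i,n_j)}-1$. The key consequence I would extract is: for any target tuple of exponents $(j_1,\ldots,j_t)$ with $j_i$ ranging over the appropriate index set, there is a \emph{single} power $M^{s}$ simultaneously realizing $M_i^{j_i}$ in each block — this is where the coprimality is used, via a CRT-type argument on the exponents.

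The main steps, in order, are: (1) record the block-companion matrix $M$ and note $\cG=\ideal{M}\le\GL_n(\F_2)$; (2) observe that a generic member of $\cC_1\circledast\cdots\circledast\cC_t$ has the form $\im(U_{1,l_1},\ldots,U_{t,l_t})$ with each $U_{i,l_i}$ either $0$ or $U_iM_i^{j_i}$ for some $j_i$, and in the latter case $U_{i,l_i}=U_iM_i^{j_i}$; (3) show that applying $M^{s}$ to such a subspace cycles the nonzero blocks synchronously, i.e. $\im(U_{1,l_1},\ldots)\cdot M^{s}=\im(U_1M_1^{j_1+s},\ldots)$ where the zero blocks stay zero because $0\cdot M_i^{s}=0$; (4) conclude that the $\cG$-orbit of a fixed subspace with a prescribed \emph{pattern} of zero/nonzero blocks sweeps out, thanks to the CRT statement, exactly all subspaces with that same zero-pattern and all choices of the nonzero blocks' shifts; hence $\cC_1\circledast\cdots\circledast\cC_t$ decomposes as a disjoint union of such $\cG$-orbits, one for each nonempty subset $\varnothing\ne S\subseteq\{1,\ldots,t\}$ of ``active'' blocks (and possibly several orbits per pattern when $\mathrm{lcm}$ exceeds the product of a subset's orders). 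Finally (5) invoke Theorem~\ref{T-PatchMany} to know the code is well-defined with the stated parameters, so all that was needed was the orbit structure.

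The hard part will be step (4): proving that the $\cG$-orbit of one subspace with a given block-pattern actually \emph{covers} all subspaces in $\cC_1\circledast\cdots\circledast\cC_t$ with that pattern, rather than just cycling through a sub-collection. For a pattern with active set $S$, the relevant shifts $(j_i)_{i\in S}$ must be simultaneously hit by one exponent $s$ modulo the individual orders $2^{n_i}-1$; by CRT this is possible for \emph{all} tuples precisely when the moduli are pairwise coprime, which fails in general — e.g. $\gcd(2^4-1,2^6-1)=2^2-1=3$. So the honest statement is only that $\cC_1\circledast\cdots\circledast\cC_t$ is a \emph{union} (not a single orbit, and the orbits within one pattern may be several), which is exactly what the proposition claims. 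I would therefore handle step (4) by fixing, for each nonempty $S$, a base subspace and letting $s$ run over $0,\ldots,\ord{M}-1$; the resulting orbit lies in the code by step (3), and conversely every subspace of $\cC_1\circledast\cdots\circledast\cC_t$ lies in \emph{some} such orbit because its shift-tuple is congruent, modulo each $2^{n_i}-1$, to the base tuple plus some common $s$ after re-choosing the base — a short pigeonhole/counting argument comparing $\sum_{\varnothing\ne S}(\text{number of orbits with pattern }S)\cdot\ord{M\restriction S}$ against $N=\prod(N_i+1)-1$. The coprimality hypothesis $\gcd(n_1,\ldots,n_t)=1$ is what makes $\cG$ act with a \emph{single} cyclic group rather than forcing a product of groups, and it is the only place global coprimality (as opposed to pairwise) is genuinely needed — to ensure the block orders have no common factor forcing the top block-companion matrix's order to collapse.
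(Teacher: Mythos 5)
Your proposal is correct for the statement as literally phrased, but it takes a genuinely different route from the paper's. The paper acts with the full block-diagonal group $\{\mathrm{diag}(M_1^{l_1},\ldots,M_t^{l_t})\}$, of order $\prod_{i=1}^t(2^{n_i}-1)$: it splits $\cC_1\circledast\cdots\circledast\cC_t$ into the $2^t-1$ classes $\cC^{(j)}$ indexed by the zero/nonzero block patterns, observes that each class is a single orbit of that group, and argues the group is cyclic from $\gcd(2^{n_1}-1,\ldots,2^{n_t}-1)=2^{\gcd(n_1,\ldots,n_t)}-1=1$. You instead take $\cG=\ideal{M}$ with $M=\mathrm{diag}(M_1,\ldots,M_t)$, whose cyclicity is automatic and whose order is only $\mathrm{lcm}(2^{n_1}-1,\ldots,2^{n_t}-1)$; since each constituent orbit is full, $\cC$ is invariant under $M$ (your step (3)), hence a union of $\cG$-orbits, possibly several per zero-pattern. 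The paper's choice buys the stronger structural conclusion highlighted in the remark after the proposition (exactly $2^t-1$ orbits under a cyclic group of order $\prod_{i=1}^t(2^{n_i}-1)$); yours buys robustness, and your caution about pairwise versus global coprimality is well placed: a direct product of cyclic groups is cyclic only when the orders are \emph{pairwise} coprime, and for $t\geq 3$ the hypothesis $\gcd(n_1,\ldots,n_t)=1$ does not force this (e.g. $(n_1,n_2,n_3)=(4,6,9)$ gives orders $15,63,511$ with $\gcd(15,63)=3$), so the paper's cyclicity claim really needs pairwise coprimality, while your weaker orbit decomposition is immune. Two blemishes in your write-up: the closing assertion that global coprimality is ``genuinely needed'' to keep the order of $M$ from collapsing is not accurate --- in your argument the hypothesis is never used, $\ideal{M}$ is cyclic regardless, and $\ord{M}$ equals the lcm in any case; and the pigeonhole/counting argument sketched for step (4) is unnecessary, since invariance of $\cC$ under $M$ already makes $\cC$ a disjoint union of full $\cG$-orbits.
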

Note that due to Theorem~\ref{T-PatchMany} the code $\cC:=\cC_1\circledast\ldots\circledast\cC_t$ has cardinality $2^n-1$.
But one should also observe that~$\cC$ is not necessarily cyclic with respect to the group $\F_{2^n}^*$.
As we will see in the proof, the cyclic subgroup of $\GL_n(\F_2)$ referred to in the theorem has order $\prod_{i=1}^t (2^{n_i}-1)$,
and this is less than $2^n-1$ if $t>1$.

\begin{proof}
Denote by $M_i\in\GL_{n_i}(\F)$ the companion matrix of the minimal polynomial of~$\alpha_i$ over~$\F_q$, see~\eqref{e-Mf}.
Then, as in~\eqref{e-UMf}, the codes~$\cC_i$ are given by
\[
    \cC_i=\{\im U_iM_i^l\mid l=0,\ldots,2^{n_i}-2\} \text{ for some suitable $U_i\in\F^{k\times n_i}$}.
\]
The code~$\cC:=\cC_1\circledast\ldots\circledast\cC_t$ has the form
\[
  \cC=\Big\{\im (V_1M_1^{l_1},\ldots,V_tM_t^{l_t})\mid V_i\in\{U_i,0_{k\times n_i}\},\,
      l_i\in\{0,\ldots,2^{n_i}-2\},\,(V_1,\ldots,V_t)\neq(0,\ldots,0)\Big\}.
\]
Using the notation $W_j\in\F^{k\times n},\,j=1,\ldots,2^t-1$, for the $2^t-1$ choices for $(V_1,\ldots,V_t)$, the code~$\cC$ is
the union of $2^t-1$ codes of the form
\[
   \cC^{(j)}=\Big\{\im \big(W_j\cdot\text{diag}(M_1^{l_1},\ldots,M_t^{l_t})\big)\,\Big|\, l_i\in\{0,\ldots,2^{n_i}-2\}\Big\}.
\]
Note that each $M_i$ generates a cyclic group of order~$2^{n_i}-1$.
Hence the direct product of these groups is a cyclic group if and
only if $\gcd(2^{n_1}-1,\ldots,2^{n_t}-1)=1$, which in turn is the case if and only if $\gcd(n_1,\ldots,n_t)=1$.
As a consequence, if $\gcd(n_1,\ldots,n_t)=1$, then each set $\cC^{(j)}$ is an orbit code with respect to this
cyclic group, and thus~$\cC$ is the union of cyclic orbit codes.
\end{proof}

We remark that the above does not provide a necessary condition for~$\cC$ to be a union of cyclic orbit codes
because it may happen that~$\cC$ is such a code without the lengths being relatively prime.
\begin{exa}\label{E-PatchSecondBest}
Let us use the binary subspace codes found in Example~\ref{E-k3SB}.
These are~$15$ codes,~$\cC_i$, of lengths $6,7,\ldots,20$.
Thus $\cC_1\circledast\ldots\circledast\cC_{15}$ has length $\sum_{i=6}^{20}i=195$, cardinality~$2^{195}-1$
and distance~$4$.
It is the union of $2^{15}-1$ cyclic orbit codes.
Note that the cardinality of this code equals the maximum cardinality of any binary cyclic orbit code of length~$195$.
\end{exa}

As we show next, applying the linkage construction to cyclic orbit codes results in a code whose cardinality
is upper bounded by the maximum cardinality of a cyclic orbit code of the same length.

\begin{prop}\label{P-PatchCard}
Let~$\cC_i,\,i=1,\ldots,t$, be as in Proposition~\ref{P-COCPatch}, and let
$n:=\sum_{i=1}^t n_i$.
Then $\cC:=\cC_1\circledast\ldots\circledast\cC_t$ satisfies $|\cC|\leq \frac{q^n-1}{q-1}$ with equality if and only if
$q=2$ and~$\F_2$ is the best friend of~$\cC_i$ for all~$i$.
\end{prop}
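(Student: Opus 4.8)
The plan is to compute $|\cC|$ directly from Theorem~\ref{T-PatchMany} and compare it to $\frac{q^n-1}{q-1}$. By Proposition~\ref{P-COCPatch} and the remarks after it, each constituent code $\cC_i$ has cardinality $N_i = q^{n_i}-1$ (this is where the hypothesis that $\F_2$---or more generally $\F_q$---is the best friend of each $\cC_i$ enters, via Corollary~\ref{C-cardinality}; but note that in the setting of Proposition~\ref{P-COCPatch} we have $q=2$ and cardinality $2^{n_i}-1$, so $N_i = q^{n_i}-1$ always holds there). Hence Theorem~\ref{T-PatchMany} gives
\[
   |\cC| = \Big(\prod_{i=1}^t (N_i+1)\Big) - 1 = \Big(\prod_{i=1}^t q^{n_i}\Big) - 1 = q^n - 1,
\]
where $n = \sum_i n_i$. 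So in fact $|\cC| = q^n - 1$ under the hypotheses of Proposition~\ref{P-COCPatch}.

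\textbf{Reconciling with the claimed bound.} Since $q^n - 1 \leq \frac{q^n-1}{q-1}$ holds if and only if $q-1 \leq 1$, i.e. $q = 2$, the inequality $|\cC| \leq \frac{q^n-1}{q-1}$ needs the statement to be read as follows: the constituent codes $\cC_i$ of Proposition~\ref{P-COCPatch} are stated there only for $q=2$, so strictly speaking the hypothesis ``$\cC_i$ as in Proposition~\ref{P-COCPatch}'' already forces $q=2$, and then $|\cC| = q^n - 1 = \frac{q^n-1}{q-1}$, giving equality. For a genuine inequality one should allow $\cC_i$ to be a primitive cyclic orbit code over an arbitrary $\F_q$ with an arbitrary best friend $\F_{q^{r_i}}$; then $N_i = \frac{q^{n_i}-1}{q^{r_i}-1} \leq q^{n_i}-1$, so
\[
   |\cC| + 1 = \prod_{i=1}^t (N_i + 1) \leq \prod_{i=1}^t q^{n_i} = q^n,
\]
whence $|\cC| \leq q^n - 1 \leq \frac{q^n-1}{q-1}$. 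The first inequality is an equality precisely when $N_i + 1 = q^{n_i}$ for all $i$, i.e. $N_i = q^{n_i}-1$, i.e. $r_i = 1$ and $\F_q$ is the best friend of $\cC_i$ for every $i$; the second is an equality precisely when $q = 2$. Combining, equality throughout holds if and only if $q = 2$ and $\F_2$ is the best friend of each $\cC_i$.

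\textbf{Main obstacle.} The only real subtlety is the bookkeeping between ``$N_i+1$'' in Theorem~\ref{T-PatchMany} and the cardinality formula from Corollary~\ref{C-cardinality}: one must track that the ``$+1$'' in the linkage count exactly accounts for the zero block $U_{i,0}$, so that $\prod(N_i+1)$ telescopes cleanly into a power of $q$ only when each $N_i+1$ is itself a prime-power of the right shape. Once the identity $\prod(N_i+1) = q^n$ is recognized in the best-friend-$\F_q$ case, both the inequality and the equality characterization are immediate; the bound $q^n - 1 \leq \frac{q^n-1}{q-1}$ and its sharpness at $q=2$ are elementary. I would also remark explicitly, to avoid confusion, that under the literal hypotheses of Proposition~\ref{P-COCPatch} (forcing $q=2$) the inequality is automatically an equality.
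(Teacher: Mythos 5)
Your reading of the hypotheses is the right one: taken literally, Proposition~\ref{P-COCPatch} forces $q=2$ and $|\cC_i|=2^{n_i}-1$, so the statement only has content if the $\cC_i$ are allowed to be primitive cyclic orbit codes over an arbitrary $\F_q$ with arbitrary best friends $\F_{q^{r_i}}$ (the paper's own proof starts exactly this way), and your treatment of the degenerate literal case is correct. However, your proof of the bound in the general case has a genuine gap: from $N_i\le q^{n_i}-1$ you obtain $|\cC|\le q^n-1$ and then conclude via the chain ``$|\cC|\le q^n-1\le\frac{q^n-1}{q-1}$''. The last inequality is false for every $q\ge 3$ (indeed $\frac{q^n-1}{q-1}<q^n-1$ whenever $q>2$), as you yourself point out earlier in your write-up, so your argument establishes the claimed bound only when $q=2$ and is internally inconsistent for larger $q$. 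The point of the proposition is precisely that the bound $\frac{q^n-1}{q-1}$ is sharper than $q^n-1$, so bounding each factor by $q^{n_i}$ gives away too much. A smaller slip: $N_i=q^{n_i}-1$ forces $q^{r_i}-1=1$, i.e.\ $q=2$ and $r_i=1$, not merely $r_i=1$; and since the bound itself is unproved for $q\ge3$, the ``only if'' part of the equality characterization is also not established.

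The repair is the paper's route: use $N_i=\frac{q^{n_i}-1}{q^{r_i}-1}\le\frac{q^{n_i}-1}{q-1}$ (since $r_i\ge 1$), so that $|\cC|+1\le\prod_{i=1}^t\big(\frac{q^{n_i}-1}{q-1}+1\big)$, and then prove the product estimate $\prod_{i=1}^t\big(\frac{q^{n_i}-1}{q-1}+1\big)\le\frac{q^n-1}{q-1}+1$ by induction on $t$. For $t=2$ one uses $\frac{q^{n_1}-1}{q-1}+1\le q^{n_1}$ to get $\big(\frac{q^{n_1}-1}{q-1}+1\big)\big(\frac{q^{n_2}-1}{q-1}+1\big)\le q^{n_1}\frac{q^{n_2}-1}{q-1}+\frac{q^{n_1}-1}{q-1}+1=\frac{q^{n_1+n_2}-1}{q-1}+1$. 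Tracking equality: the first estimate is tight iff $r_i=1$ for all $i$ (i.e.\ $\F_q$ is the best friend of each $\cC_i$), and $\frac{q^{n_1}-1}{q-1}+1\le q^{n_1}$ is tight iff $q=2$, which together give exactly the stated equality characterization.
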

\begin{proof}
Let~$\F_{q^{r_i}}$ be the best friend of~$\cC_i$.
Then $|\cC_i|=\frac{q^{n_i}-1}{q^{r_i}-1}$.
Using Theorem~\ref{T-PatchMany}
we compute
\begin{equation}\label{e-Ccard1}
   |\cC|=\prod_{i=1}^t\Big(\frac{q^{n_i}-1}{q^{r_i}-1}+1\Big)-1\leq \prod_{i=1}^t\Big(\frac{q^{n_i}-1}{q-1}+1\Big)-1.
\end{equation}
We show that
\[
   \prod_{i=1}^t\Big(\frac{q^{n_i}-1}{q-1}+1\Big)\leq \frac{q^{\sum_{i=1}^t n_i}-1}{q-1}+1.
\]
In order to do so, we induct on~$t$.
It is clear that it suffices to consider the case~$t=2$.
Using that $\frac{q^{n_1}-1}{q-1}+1\leq q^{n_1}$, we compute
\begin{equation}\label{e-Ccard2}
  \Big(\frac{q^{n_1}-1}{q-1}+1\Big)\Big(\frac{q^{n_2}-1}{q-1}+1\Big)
  \leq q^{n_1}\frac{q^{n_2}-1}{q-1}+\frac{q^{n_1}-1}{q-1}+1
  =\frac{q^{n_1+n_2}-1}{q-1}+1,
\end{equation}
which is what we wanted.
This proves $|\cC|\leq \frac{q^n-1}{q-1}$.
Finally, we have equality in~\eqref{e-Ccard1} if and only if $r_i=1$ for all~$i$, and equality in~\eqref{e-Ccard2}
if and only if~$q=2$.
This concludes the proof.
\end{proof}

The cardinality of the linkage code in Theorem~\ref{T-PatchTwo} can be increased if one of the two
constituent codes is a subset of a cyclic orbit code.
In this case we can extend the code~$\tilde{\cC}_3$ by allowing all powers of the primitive element, regardless
of the stabilizer of the constituent orbit code.
As we will see in the proof, the linkage to the other constituent will guarantee that the distance is not compromised.
The construction in the proof of~\cite[Thm.~11]{EtVa11} by Etzion and Vardy may be regarded as the special case where
$N_2=1$ and~$\cC_1$ is a cyclic orbit spread code.

\begin{theo}\label{T-PatchTwoCyclic}
Let~$\cC_1=\{\im U_{1,l}\mid l\in[N_1]\}$ be a $(n_1,\,N_1,\,d_1,\,k)_q$ code,
thus $U_{1,l}\in\F^{k\times n_1}$ are matrices of rank~$k$.
Furthermore, let~$\alpha$ be a primitive element of~$\F_{q^{n_2}}$, and denote by~$M\in\GL_{n_2}(\F)$ the companion matrix of
the minimal polynomial of~$\alpha$.
Let $U_2\in\F^{k\times n_2}$ be a matrix of rank~$k$ and~$\cC_2$ be a subset of the cyclic orbit code $\orb(\im U_2)$ of length~$n_2$.
Hence there exists a set $\cL\subseteq\{0,1,\ldots,q^{n_2}-2\}$ such that $\cC_2=\{\im(U_2 M^l)\mid l\in\cL\}$ and
$\cC_2$ is a $(n_2,\,N_2,\,d_2,\,k)_q$ code where $N_2=|\cL|$.
Define the subspace code $\widetilde{\cC}$ of length~$n:=n_1+n_2$ as
$\widetilde{\cC}=\tilde{\cC}_1\cup\tilde{\cC}_2\cup\tilde{\cC}_3$,
where
\begin{align*}
     &\tilde{\cC}_1=\{\im(U_{1,l},\,0_{k\times n_2})\mid l\in[N_1]\}, \\[.5ex]
     &\tilde{\cC}_2=\{\im(0_{k\times n_1},\, U_2 M^l)\mid l\in\cL\}, \\[.5ex]
     &\tilde{\cC}_3=\{\im(U_{1,l},\, U_2 M^m)\mid (l,m)\in[N_1]\times\{0,\ldots,q^{n_2}-2\}\}.
\end{align*}
Then~$\widetilde{\cC}$ is a $(n,\,N,\,d,\,k)_q$ code, where $N=N_1+N_2+(q^{n_2}-1)N_1$ and
$d=\min\{d_1,\,d_2\}$.
\end{theo}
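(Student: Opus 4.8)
The plan is to mirror the proof of Theorem~\ref{T-PatchTwo}, since the structure of $\widetilde{\cC}$ is the same except that the second index in $\tilde{\cC}_3$ now runs over all of $\{0,\ldots,q^{n_2}-2\}$ rather than over $\cL$. First I would verify the cardinality: the three sets $\tilde{\cC}_1,\tilde{\cC}_2,\tilde{\cC}_3$ are pairwise disjoint (subspaces in $\tilde{\cC}_1$ project to $0$ on the second block, those in $\tilde{\cC}_2$ to $0$ on the first, and those in $\tilde{\cC}_3$ are nonzero on both), so $|\widetilde{\cC}|=N_1+N_2+(q^{n_2}-1)N_1=N$. Next, $\ds(\tilde{\cC}_1)=d_1$ and $\ds(\tilde{\cC}_2)=d_2$ exactly as before, since $\tilde{\cC}_2$ is a linearly isometric copy of $\cC_2$ embedded in the second block. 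The cross distances between $\tilde{\cC}_1$ and $\tilde{\cC}_2\cup\tilde{\cC}_3$, and between $\tilde{\cC}_2$ and $\tilde{\cC}_1\cup\tilde{\cC}_3$, are all $2k$ by the same trivial-intersection argument: any vector in the intersection of a subspace from $\tilde{\cC}_1$ with one having nonzero first block must, by the block-diagonal form of its representation, force the coefficient vector $z\in\F^k$ to vanish.

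The only point needing genuine care is the subspace distance between two distinct elements $\cX=\im(U_{1,l},U_2M^m)$ and $\cY=\im(U_{1,l'},U_2M^{m'})$ of $\tilde{\cC}_3$. Here I would split into two cases. If $l\neq l'$, then the projection-is-injective argument from Theorem~\ref{T-PatchTwo} applies verbatim: for $(x_1,x_2)\in\cX\cap\cY$ we have $x_1\in\im U_{1,l}\cap\im U_{1,l'}$, and writing $(x_1,x_2)=z(U_{1,l},U_2M^m)$ shows $x_1=0\Rightarrow z=0\Rightarrow x_2=0$, so $\dim(\cX\cap\cY)\leq\dim(\im U_{1,l}\cap\im U_{1,l'})$ and hence $\ds(\cX,\cY)\geq d_1\geq\min\{d_1,d_2\}$. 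If $l=l'$ but $m\neq m'$, then $\im U_{1,l}=\im U_{1,l'}$ and the projection onto the first block need no longer be injective; but now I project onto the \emph{second} block instead. The same algebra gives $x_2\in\im(U_2M^m)\cap\im(U_2M^{m'})$ and $x_2=0\Rightarrow z=0\Rightarrow x_1=0$, so $\dim(\cX\cap\cY)\leq\dim\big(\im(U_2M^m)\cap\im(U_2M^{m'})\big)$. Since $\im(U_2M^m)$ and $\im(U_2M^{m'})$ are two distinct elements of the full cyclic orbit code $\orb(\im U_2)$ (distinct because $m\neq m'$ and $m-m'$ is not a multiple of the stabilizer period — or rather, if they coincided we would have $\cX=\cY$, contradicting distinctness), their distance is at least $\ds(\orb(\im U_2))$.

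The subtlety — and the one place the statement's hypotheses must be used carefully — is that $\ds(\orb(\im U_2))$ need not equal $d_2=\ds(\cC_2)$, since $\cC_2$ is only a \emph{subset} of the orbit code, and passing to a subset can only increase the minimum distance: $\ds(\cC_2)\leq\ds(\orb(\im U_2))$. So in the case $l=l'$, $m\neq m'$ we actually get the stronger bound $\ds(\cX,\cY)\geq\ds(\orb(\im U_2))\geq d_2\geq\min\{d_1,d_2\}$. In all cases $\ds(\cX,\cY)\geq\min\{d_1,d_2\}=d$, and combining with the cross-distance computations above yields $\ds(\widetilde{\cC})\geq d$; equality holds because $\ds(\tilde{\cC}_1)=d_1$ and $\ds(\tilde{\cC}_2)=d_2$ are both realized. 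I expect the main obstacle to be purely organizational: being careful that when $l=l'$ one switches to the second-block projection, and noting explicitly why $m\neq m'$ still guarantees $\im(U_2M^m)\neq\im(U_2M^{m'})$ — which follows since otherwise $\cX$ and $\cY$ would be the same subspace. No new ideas beyond Theorem~\ref{T-PatchTwo} are required.
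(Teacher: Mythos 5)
Your handling of the case $l=l'$, $m\neq m'$ contains two genuine errors, and this is precisely the case where the theorem needs a new idea beyond Theorem~\ref{T-PatchTwo}. First, your claim that $m\neq m'$ forces $\im(U_2M^m)\neq\im(U_2M^{m'})$ ``since otherwise $\cX=\cY$'' is false: because $m,m'$ range over all of $\{0,\ldots,q^{n_2}-2\}$ and not over a transversal of the orbit, the two second blocks coincide as subspaces whenever $m\equiv m'$ modulo $|\orb(\im U_2)|$, which happens for every $q>2$ (the stabilizer always contains $\F_q^*$) and in general whenever the best friend of $\im U_2$ is larger than $\F_q$. In that situation $U_2M^{m'}=AU_2M^m$ for some $A\in\GL_k(\F_q)$ with $A\neq I$, and one checks that $\cX=\im(U_{1,l},U_2M^m)$ and $\cY=\im(U_{1,l},AU_2M^m)$ are still distinct subspaces; your projection onto the second block then only gives $\dim(\cX\cap\cY)\leq k$, which is vacuous. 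Second, even in the sub-case where the second blocks are distinct, your inequality is backwards: since $\cC_2$ is a subset of $\orb(\im U_2)$, one has $d_2=\ds(\cC_2)\geq\ds(\orb(\im U_2))$, not $\leq$, so the bound $\ds(\cX,\cY)\geq\ds(\orb(\im U_2))$ does not yield $\geq d_2$. This matters in practice: in Example~\ref{E-PartSpread} the subset $\cC_2$ is a partial spread with $d_2=6$ while the ambient orbit code has strictly smaller distance, so your argument would not certify the claimed $d=\min\{d_1,d_2\}$.

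The paper closes this case differently, and this is the essential point of the theorem. For $(x_1,x_2)\in\cX\cap\cY$ write $(x_1,x_2)=z(U_{1,l},U_2M^m)=z'(U_{1,l},U_2M^{m'})$; the full row rank of the \emph{common} first block $U_{1,l}$ forces $z=z'$, hence $zU_2M^m=zU_2M^{m'}$. Translating via the isomorphism $\varphi$ of \eqref{e-FFn}, this reads $c\alpha^m=c\alpha^{m'}$ with $c=\varphi^{-1}(zU_2)$ in the field $\F_{q^{n_2}}$; since $\alpha$ is primitive and $0\leq m,m'\leq q^{n_2}-2$ with $m\neq m'$, we get $\alpha^m\neq\alpha^{m'}$, hence $c=0$, hence $z=0$ and $\cX\cap\cY=\{0\}$, so $\ds(\cX,\cY)=2k$. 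It is exactly this linkage argument (shared full-rank first block plus primitivity of $\alpha$) that allows $m$ to run over all exponents, including repetitions of the second-block subspace, without compromising the distance; a pure block-projection argument of the kind you propose cannot reach this conclusion. The remaining parts of your proposal (cardinality, the distances within and between $\tilde{\cC}_1$, $\tilde{\cC}_2$, and across to $\tilde{\cC}_3$, and the case $l\neq l'$) do follow the paper's route and are fine.
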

\begin{proof}
Comparing with Theorem~\ref{T-PatchTwo} and its proof, we see that the only case that remains to be considered is the distance between two
subspaces of the form $\cX=\im (U_{1,l},\,U_2M^m)$ and $\cY=\im(U_{1,l'},\,U_2M^{m'})$, where $l\neq l'$ or $m\neq m'$.
If $l\neq l'$, then $\im(U_{1,l})\neq\im(U_{1,l'})$, and as in the proof of Theorem~\ref{T-PatchTwo} we conclude
$\dim(\cX\cap\cY)\leq\dim\big(\im(U_{1,l})\cap\im(U_{1,l'})\big)$, and thus $\ds(\cX,\cY)\geq\max\{d_1,\,d_2\}$, as desired.

Let now $l=l'$. Then $m\neq m'$, but since $m,\,m'\in\{0,\ldots,q^{n_2}-2\}$, we may have that $\im U_2M^m=\im U_2M^{m'}$, and thus
we have to be more detailed.
Let $(x_1,x_2)\in\cX\cap\cY$, say $(x_1,x_2)=z(U_{1,l},\,U_2 M^m)=z'(U_{1,l},\,U_2 M^{m'})$.
Then the full row rank of~$U_{1,l}$ yields $z=z'$ and thus
$x_2=z(U_2 M^m)=z(U_2 M^{m'})$.
Using the isomorphism~$\varphi$ from \eqref{e-FFn}, this translates into
\[
  \varphi^{-1}(x_2)=c\alpha^m=c\alpha^{m'},\ \text{ where }c=\varphi^{-1}(z U_2).
\]
This is an identity in the field~$\Fqn$, and since $m,\,m'$ are less than the order of~$\alpha$, we have
$\alpha^m\neq\alpha^{m'}$, and so $c=0$.
Thus $z=0$ and $\cX\cap\cY=\{0\}$.
All of this shows that if~$l=l'$ then $\ds(\cX,\,\cY)=2k\geq\max\{d_1,d_2\}$, and this concludes the proof.
\end{proof}

We close the section with an example illustrating the construction.
\begin{exa}\label{E-PartSpread}
Let $k=3$.
In $\F_2^6$, choose the spread code $\cC_1=\orb(\F_{2^3})$.
This is a $(6,\,9,\,6,\,3)_2$ code, where the cardinality follows from Corollary~\ref{C-Fqk}.
In~$\F_2^7$ consider the cyclic orbit code $\cC'=\orb_{\alpha}(U)$, where~$\alpha$ is a primitive
element of~$\F_{2^7}$ and
\[
  U=\begin{pmatrix}1&0&0&0&0&0&0\\0&1&0&0&1&0&1\\0&0&1&1&0&1&0\end{pmatrix}.
\]
Let~$M\in\GL_7(\F_2)$ be the companion matrix of the minimal polynomial of~$\alpha$.
Then one can check that the subset~$\cC_2$ of~$\cC'$ given by
\[
   \cC_2=\big\{\im(U_2 M^j)\,\big|\, j\in\{0,2,5,10,20,23,57,72,75,91,95,109,113\}\big\}
\]
is a partial spread.
Thus~$\cC_2$ is a $(7,\,13,\,6,\,3)_2$ code.
Applying Theorem~\ref{T-PatchTwoCyclic} to~$\cC_1$ and~$\cC_2$ results in a
$(13,\,1165,\,6,\,3)_2$ code~$\widetilde{\cC}$, where the cardinality
stems from $9+13+9(2^7-1)=1165$.
This cardinality comes remarkably close to the upper bound $(2^{13}-2)/7-1=1169$ for
partial~$3$-spreads in~$\F_2^{13}$, see~\cite[Thm.~5]{EJSSS10}.
In fact, the difference of~$4$ is due to the fact that the partial spread~$\cC_2$ in~$\F_2^7$ is not maximal.
Again with~\cite{EJSSS10} it is known that one can find partial spreads in~$\F_2^7$ with~$17$ elements, as opposed to our
$13=|\cC_2|$.
While for $k=3$ and~$q=2$ maximal partial spreads are known for any~$n$ due to~\cite{EJSSS10}, we believe that
Theorem~\ref{T-PatchTwoCyclic} bears promising potential for larger~$k$.
\end{exa}

\section{Conclusion and Open Problems}
We have presented a detailed study of cyclic orbit codes based on the stabilizer subfield.
As has become clear, these codes have a rich structure which allows us to infer properties such as cardinality
and estimates on the distance.
While cyclic orbit codes themselves are in general not very large, taking unions of such codes
has resulted in constant dimension codes whose cardinalities come very close to known bounds.
Codes of this form have been found earlier in~\cite{EtVa11,KoKu08} via computer search; see the introduction of this paper
for further details.
Unfortunately, no systematic construction of unions of cyclic orbit codes or other orbit codes is known so far.
This and other observations in the previous sections lead to the following open problems.
\begin{arabiclist}
\item Find constructions of good cyclic subspace codes.
      In other words, find systematic ways to take unions of cyclic orbit codes without decreasing the distance.
\item Cyclic orbit codes with maximum distance, that is, $2k$, are spread codes (Corollary~\ref{C-Spread}) and thus well understood.
      The best distance a non-spread cyclic orbit code of dimension~$k$ can attain is thus $2(k-1)$, but a construction
      of such codes is not yet known.
      Therefore we formulate:
      For given~$n$ and $k\leq n/2$ construct cyclic orbit codes of length~$n$, dimension~$k$ and distance
      $2(k-1)$ or prove that no such code exists. See also Example~\ref{E-k3SB} and the paragraph right before Example~\ref{E-bounds}.
\item Make use of the algebraic structure of cyclic orbit codes in order to find efficient decoding algorithms.
      This has been initiated already in~\cite{EKW10,TMBR13}, but as discussed in the conclusion of~\cite{TMBR13} further research is needed.
\item Use other groups for constructing orbit codes. For instance, in~\cite{BEOVW13} the authors discovered
      non-trivial $q$-analogs of Steiner systems by testing unions of subspace orbits under the action of
      the normalizer group of $\F_{q^n}^*$ in $\GL(n,\F_q)$
      (thus under the combined action of~$\F_{q^n}^*$ and the Galois group $\text{Gal}(\F_{q^n}\mid\F_q)$).
\item In Section~\ref{S-Link} we have shown a linkage construction for general constant dimension codes and an improved version for
      cyclic orbit codes.
      We believe that this construction can be further enhanced by using suitable constituent codes.
\end{arabiclist}

\bibliographystyle{abbrv}

\end{document}